\documentclass{llncs}
\usepackage{times}
\usepackage{amsmath}
\usepackage{amsfonts}
\usepackage{amssymb}
\usepackage{graphicx}
\usepackage{listings}
\usepackage{paralist}
\usepackage{tikz}
\usepackage{hyperref}
\usepackage{url}
\usepackage{multicol}
\usepackage{wrapfig}
\usepackage{array}
\usepackage{color}
 \usepackage{caption}
\usepackage{booktabs}
\usepackage[font=small,skip=0pt]{caption}
\usepackage{subcaption}
\usepackage{enumerate}
\usepackage{algorithm}
\usepackage[noend]{algpseudocode}
\errorcontextlines\maxdimen

\usepackage{multicol}

\algnewcommand{\IfThenElse}[3]{
	\State \algorithmicif\ #1\ \algorithmicthen\ #2\ \algorithmicelse\ #3}

\newcommand{\chuchu}[1]{\textcolor{black}{#1}}

\newcommand{\RT}{R}

\newcommand{\EA}{EA}
\newcommand{\bloatf}{\mathit{ted}}
\newcommand{\Post}{\mathit{CompTed}}
\newcommand{\eep}{\mathit{eep}}

\hypersetup{
    bookmarks=true,         
    unicode=false,          
    pdftoolbar=true,        
    pdfmenubar=true,        
    pdffitwindow=false,     
    pdfstartview={FitH},    
    pdftitle={My title},    
    pdfauthor={Author},     
    pdfsubject={Subject},   
    pdfcreator={Creator},   
    pdfproducer={Producer}, 
    pdfkeywords={keyword1} {key2} {key3}, 
    pdfnewwindow=true,      
    colorlinks=true,       
    linkcolor=blue,          
    citecolor=blue,        
    filecolor=magenta,      
    urlcolor=blue           
}

\newtheorem{assumption}{Assumption}

\def\A{{\cal A}} 
\def\B{{\cal B}} 
\def\E{{\cal E}} 
\def\I{{\cal I}} 
\def\N{{\cal N}} 
\def\R{{\cal R}} 
\def\U{{\cal U}} 


\newcommand{\val}[1]{\relax\ifmmode {\mbox{\it Val}}(#1) \else ${{\it Val}}({#1})$\fi} 
\newcommand{\traj}[1]{\relax\ifmmode {\mathit Traj}(#1) \else ${\mathit Traj}({#1})$\fi}

\newcommand{\arrow}[1]{\mathrel{\stackrel{#1}{\rightarrow}}}



\newcommand{\deq}{\mathrel{\stackrel{\scriptscriptstyle\Delta}{=}}}

\newcommand{\set}[2]{\{#1 \ | \ #2\}}


\newcommand{\ball}[2]{\B_{#1}(#2)}
\newcommand{\restr}{\mathrel{\lceil}}

\newcommand{\mysim}[1]{\stackrel{#1}{\sim}}
\newcommand{\myequiv}[1]{\stackrel{#1}{\equiv}}
\newcommand{\myequivexe}[1]{\stackrel{#1}{\approx}}
\newcommand{\trace}[1]{\mathit{trace}(#1)}
\newcommand{\exe}[1]{\xi_{#1}}

\newcommand{\execs}[1]{\relax\ifmmode {\sf Execs}(#1) \else ${\sf Exec}(#1)$\fi} 
\newcommand{\traces}[1]{\relax\ifmmode {\sf Traces}(#1) \else ${\sf Tracec}(#1)$\fi}
\newcommand{\reach}[2]{\relax\ifmmode {\sf Reach}_{#1}(#2) \else ${\sf Reach}_{#1}(#2)$\fi} 

\newcommand{\frag}[1]{\relax\ifmmode {\sf Frags}_{#1} \else ${\sf Frags}_{#1}$\fi} 
\newcommand{\fragf}[1]{\relax\ifmmode {\sf Frags}^*_{#1} \else ${\sf Frags}^*_{#1}$\fi} 
\newcommand{\fragi}[1]{\relax\ifmmode {\sf Frags}^\omega_{#1} \else ${\sf Frags}^\omega_{#1}$\fi}

\newcommand{\num}[1]{\relax\ifmmode \mathbb #1\else $\mathbb #1$\fi}
\newcommand{\nnnum}[1]{\relax\ifmmode 
	{\mathbb #1}_{\geq 0} \else ${\mathbb #1}_{\geq 0}$
	\fi}
\newcommand{\npnum}[1]{\relax\ifmmode 
	{\mathbb #1}_{\leq 0} \else ${\mathbb #1}_{\leq 0}$
	\fi}
\newcommand{\pnum}[1]{\relax\ifmmode 
	{\mathbb #1}_{> 0} \else ${\mathbb #1}_{> 0}$
	\fi}
\newcommand{\nnum}[1]{\relax\ifmmode 
	{\mathbb #1}_{< 0} \else ${\mathbb #1}_{< 0}$
	\fi}
\newcommand{\plnum}[1]{\relax\ifmmode 
	{\mathbb #1}_{+} \else ${\mathbb #1}_{+}$
	\fi}
\newcommand{\nenum}[1]{\relax\ifmmode 
	{\mathbb #1}_{-} \else ${\mathbb #1}_{-}$
	\fi}

\newcommand{\reals}{{\num R}}                    
\newcommand{\nnreals}{{\nnnum R}}                    
\newcommand{\naturals}{{\num N}}                      

\newcommand{\fstate}{\mathop{\mathsf {fstate}}}
\newcommand{\lstate}{\mathop{\mathsf {lstate}}}





\newcommand{\auto}[1]{{\operatorname{\mathsf{#1}}}}
\newcommand{\act}[1]{{\operatorname{\mathsf{#1}}}}




\newcommand{\two}[4]{
	\parbox{.95\columnwidth}{\vspace{1pt} \vfill
		\parbox[t]{#1\columnwidth}{#3}%
		\parbox[t]{#2\columnwidth}{#4}%
	}}

				\newcommand{\tup}[1]
				{
					\relax\ifmmode
					\langle #1 \rangle
					\else $\langle$ #1 $\rangle$ \fi
				}
				
				\newcommand{\lit}[1]{ \relax\ifmmode
					\mathord{\mathcode`\-="702D\sf #1\mathcode`\-="2200}
					\else {\it #1} \fi }

				\newcommand{\figuresize}{\scriptsize}

				\lstdefinelanguage{ioa}{
					basicstyle=\figuresize,
					keywordstyle=\bf \figuresize,
					identifierstyle=\it \figuresize,
					emphstyle=\tt \figuresize,
					mathescape=true,
					tabsize=20,
					sensitive=false,
					columns=fullflexible,
					keepspaces=false,
					flexiblecolumns=true,
					basewidth=0.05em,
					escapeinside={(*@}{@*)},
					moredelim=[il][\rm]{//},
					moredelim=[is][\sf \figuresize]{!}{!},
					moredelim=[is][\bf \figuresize]{*}{*},
					keywords={automaton, automata, and, assumed,
						choose,const,continue, components,
						derived, discrete, do,
						eff, external,else, elseif, evolve, end,
						fi,for, each, foreach, forward, from,
						hidden,
						in,input,internal,if,invariant, initially, imports,
						let,
						or, output, operators, od, of,
						pre, private,
						return,
						satisfies, shared, signature, simulation, stop, such,constants,
						trajectories,trajdef, transitions, that,then, type, types, to, tasks,
						variables, vocabulary, 
						when,where, with,while},
					emph={Set, set, seq, tuple, map, array, enumeration},   
					literate=
					{(}{{$($}}1
					{)}{{$)$}}1
					{\\in}{{$\in\ $}}1
					{\\preceq}{{$\preceq\ $}}1
					{\\subset}{{$\subset\ $}}1
					{\\subseteq}{{$\subseteq\ $}}1
					{\\supset}{{$\supset\ $}}1
					{\\supseteq}{{$\supseteq\ $}}1
					{\\forall}{{$\forall$}}1
					{\\le}{{$\le\ $}}1
					{\\ge}{{$\ge\ $}}1
					{\\gets}{{$\gets\ $}}1
					{\\cup}{{$\cup\ $}}1
					{\\cap}{{$\cap\ $}}1
					{\\langle}{{$\langle$}}1
					{\\rangle}{{$\rangle$}}1
					{\\exists}{{$\exists\ $}}1
					{\\bot}{{$\bot$}}1
					{\\rip}{{$\rip$}}1
					{\\emptyset}{{$\emptyset$}}1
					{\\notin}{{$\notin\ $}}1
					{\\not\\exists}{{$\not\exists\ $}}1
					{\\ne}{{$\ne\ $}}1
					{\\to}{{$\to\ $}}1
					{\\implies}{{$\implies\ $}}1
					{<}{{$<\ $}}1
					{>}{{$>\ $}}1
					{=}{{$=\ $}}1
					{~}{{$\neg\ $}}1
					{|}{{$\mid$}}1
					{'}{{$^\prime$}}1
					{\\A}{{$\forall\ $}}1
					{\\E}{{$\exists\ $}}1
					{\\nE}{{$\nexists\ $}}1
					{\\/}{{$\vee\,$}}1
					{\\vee}{{$\vee\,$}}1
					{/\\}{{$\wedge\,$}}1
					{\\wedge}{{$\wedge\,$}}1
					{=>}{{$\Rightarrow\ $}}1
					{->}{{$\rightarrow\ $}}1
					{<=}{{$\Leftarrow\ $}}1
					{<-}{{$\leftarrow\ $}}1
					{~=}{{$\neq\ $}}1
					{\\U}{{$\cup\ $}}1
					{\\I}{{$\cap\ $}}1
					{|-}{{$\vdash\ $}}1
					{-|}{{$\dashv\ $}}1
					{<<}{{$\ll\ $}}2
					{>>}{{$\gg\ $}}2
					{|}{{$\|$}}1
					{[}{{$[$}}1
					{]}{{$\,]$}}1
					{[[}{{$\langle$}}1
					{]]]}{{$]\rangle$}}1
					{]]}{{$\rangle$}}1
					{<=>}{{$\Leftrightarrow\ $}}2
					{<->}{{$\leftrightarrow\ $}}2
					{(+)}{{$\oplus\ $}}1
					{(-)}{{$\ominus\ $}}1
					{_i}{{$_{i}$}}1
					{_j}{{$_{j}$}}1
					{_{i,j}}{{$_{i,j}$}}3
					{_{j,i}}{{$_{j,i}$}}3
					{_0}{{$_0$}}1
					{_1}{{$_1$}}1
					{_2}{{$_2$}}1
					{_n}{{$_n$}}1
					{_p}{{$_p$}}1
					{_k}{{$_n$}}1
					{-}{{$\ms{-}$}}1
					{@}{{}}0
					{\\delta}{{$\delta$}}1
					{\\R}{{$\R$}}1
					{\\Rplus}{{$\Rplus$}}1
					{\\N}{{$\N$}}1
					{\\times}{{$\times\ $}}1
					{\\tau}{{$\tau$}}1
					{\\alpha}{{$\alpha$}}1
					{\\beta}{{$\beta$}}1
					{\\gamma}{{$\gamma$}}1
					{\\ell}{{$\ell\ $}}1
					{--}{{$-\ $}}1
					{\\TT}{{\hspace{1.5em}}}3        
				}
				
				\lstdefinelanguage{ioaNums}[]{ioa}
				{
					numbers=left,
					numberstyle=\tiny,
					stepnumber=2,
					numbersep=4pt
				}
				
				\lstdefinelanguage{ioaNumsRight}[]{ioa}
				{
					numbers=right,
					numberstyle=\tiny,
					stepnumber=2,
					numbersep=4pt
				}

				\lstnewenvironment{IOA}%
				{\lstset{language=IOA}}
				{}
				
				\lstnewenvironment{IOANums}%
				{
					\if@firstcolumn
					\lstset{language=IOA, numbers=left, firstnumber=auto}
					\else
					\lstset{language=IOA, numbers=right, firstnumber=auto}
					\fi
				}
				{}
				
				\lstnewenvironment{IOANumsRight}%
				{
					\lstset{language=IOA, numbers=right, firstnumber=auto}
				}
				{}
				

				\newcommand{\linefigioa}[9]{
					
				}

				\lstdefinelanguage{ioaLang}{%
					basicstyle=\ttfamily\small,
					keywordstyle=\rmfamily\bfseries\small,
					identifierstyle=\small,
					keywords={assumes,automaton,axioms,backward,bounds,by,case,choose,components,const,d,det,discrete,do,eff,else,elseif,ensuring,enumeration,evolve,fi,fire,follow,for,forward,from,hidden,if,in,%
						input,initially,internal,invariant,let, local,od,of,output,pre,schedule,signature,so,%
						simulation,states,variables, tasks, stop,tasks,that,then,to,trajdef,trajectory,trajectories,transitions,tuple,type,union,urgent,uses,when,where,while,yield},
					literate=
					{\\in}{{$\in$}}1
					{\\preceq}{{$\preceq$}}1
					{\\subset}{{$\subset$}}1
					{\\subseteq}{{$\subseteq$}}1
					{\\supset}{{$\supset$}}1
					{\\supseteq}{{$\supseteq$}}1
					{\\rho}{{$\rho$}}1
					{\\infty}{{$\infty$}}1
					{<}{{$<$}}1
					{>}{{$>$}}1
					{=}{{$=$}}1
					{~}{{$\neg$}}1 
					{|}{{$\mid$}}1
					{'}{{$^\prime$}}1
					{\\A}{{$\forall$}}1 {\\E}{{$\exists$}}1
					{\\/}{{$\vee$}}1 {/\\}{{$\wedge$}}1 
					{=>}{{$\Rightarrow$}}1 
					{->}{{$\rightarrow$}}1 
					{<=}{{$\leq$}}1 {>=}{{$\geq$}}1 {~=}{{$\neq$}}1
					{\\U}{{$\cup$}}1 {\\I}{{$\cap$}}1
					{|-}{{$\vdash$}}1 {-|}{{$\dashv$}}1
					{<<}{{$\ll$}}2 {>>}{{$\gg$}}2
					{|}{{$\|$}}1
					{<=>}{{$\Leftrightarrow$}}2 
					{<->}{{$\leftrightarrow$}}2
					{(+)}{{$\oplus$}}1
					{(-)}{{$\ominus$}}1
				}
				
				\lstdefinelanguage{bigIOALang}{%
					basicstyle=\ttfamily,
					keywordstyle=\rmfamily\bfseries,
					identifierstyle=,
					keywords={assumes,automaton,axioms,backward,by,case,choose,components,const,%
						d,det,discrete,do,eff,else,elseif,ensuring,enumeration,evolve,fi,for,forward,from,hidden,if,in%
						input,initially,internal,invariant,local,od,of,output,pre,schedule,signature,so,%
						tasks, simulation,states,stop,tasks,that,then,to,trajdef,trajectories,transitions,tuple,type,union,urgent,uses,when,where,yield},
					literate=
					{\\in}{{$\in$}}1
					{\\preceq}{{$\preceq$}}1
					{\\subset}{{$\subset$}}1
					{\\subseteq}{{$\subseteq$}}1
					{\\supset}{{$\supset$}}1
					{\\supseteq}{{$\supseteq$}}1
					{<}{{$<$}}1
					{>}{{$>$}}1
					{=}{{$=$}}1
					{~}{{$\neg$}}1 
					{|}{{$\mid$}}1
					{'}{{$^\prime$}}1
					{\\A}{{$\forall$}}1 {\\E}{{$\exists$}}1
					{\\/}{{$\vee$}}1 {/\\}{{$\wedge$}}1 
					{=>}{{$\Rightarrow$}}1 
					{->}{{$\rightarrow$}}1 
					{<=}{{$\leq$}}1 {>=}{{$\geq$}}1 {~=}{{$\neq$}}1
					{\\U}{{$\cup$}}1 {\\I}{{$\cap$}}1
					{|-}{{$\vdash$}}1 {-|}{{$\dashv$}}1
					{<<}{{$\ll$}}2 {>>}{{$\gg$}}2
					{|}{{$\|$}}1
					{<=>}{{$\Leftrightarrow$}}2 
					{<->}{{$\leftrightarrow$}}2
					{(+)}{{$\oplus$}}1
					{(-)}{{$\ominus$}}1
				}

				\lstnewenvironment{BigIOA}%
				{\lstset{language=bigIOALang,basicstyle=\ttfamily}
					\csname lst@SetFirstLabel\endcsname}
				{\csname lst@SaveFirstLabel\endcsname\vspace{-4pt}\noindent}
				
				\lstnewenvironment{SmallIOA}%
				{\lstset{language=ioaLang,basicstyle=\ttfamily\scriptsize}
					\csname lst@SetFirstLabel\endcsname}
				{\csname lst@SaveFirstLabel\endcsname\noindent}

				\newcommand{\true}{\relax\ifmmode \mathit true \else \em true \/\fi}
				\newcommand{\false}{\relax\ifmmode \mathit false \else \em false \/\fi}

				\newlength{\bracklen}

				\newcommand{\tri}[3]{\ensuremath{\mathit{#1}^\mathit{#2}_\mathit{#3}}}

				\newcommand{\sugLocalVars}[2]{\ifthenelse{\equal{}{#2}}%
					{\tri{localVars}{#1}{desug}}%
					{\tri{localVars}{#1}{#2,desug}}}
				\newcommand{\sugVars}[2]{\ifthenelse{\equal{}{#2}}%
					{\tri{vars}{#1}{desug}}%
					{\tri{vars}{#1}{#2,desug}}}

				\newenvironment{subSyntax}{\begin{array}{l}}{\end{array}}

				\newcommand{\ms}[1]{\ifmmode%
					\mathord{\mathcode`-="702D\it #1\mathcode`\-="2200}\else%
					$\mathord{\mathcode`-="702D\it #1\mathcode`\-="2200}$\fi}
				
				

				\def\A{{\cal A}} 
				\def\B{{\cal B}} 
				
				\newcommand{\vv}{{\bf v}}

				
				
				
				\lstdefinelanguage{pvs}{
					basicstyle=\tt \figuresize,
					keywordstyle=\sc \figuresize,
					identifierstyle=\it \figuresize,
					emphstyle=\tt \figuresize,
					mathescape=true,
					tabsize=20,
					sensitive=false,
					columns=fullflexible,
					keepspaces=false,
					flexiblecolumns=true,
					basewidth=0.05em,
					moredelim=[il][\rm]{//},
					moredelim=[is][\sf \figuresize]{!}{!},
					moredelim=[is][\bf \figuresize]{*}{*},
					keywords={and, 
						begin,
						cases, const,
						do,
						external, else, exists, end, endcases, endif,
						fi,for, forall, from,
						hidden,
						in, if, importing,
						let, lambda, lemma,
						measure, 
						not,
						or, of,
						return, recursive,
						stop, 
						theory, that,then, type, types, type+, to, theorem,
						var,
						with,while},
					emph={nat, setof, sequence, eq, tuple, map, array, enumeration, bool, real, exp, nnreal, posreal},   
					literate=
					{(}{{$($}}1
					{)}{{$)$}}1
					{\\in}{{$\in\ $}}1
					{\\mapsto}{{$\rightarrow\ $}}1
					{\\preceq}{{$\preceq\ $}}1
					{\\subset}{{$\subset\ $}}1
					{\\subseteq}{{$\subseteq\ $}}1
					{\\supset}{{$\supset\ $}}1
					{\\supseteq}{{$\supseteq\ $}}1
					{\\forall}{{$\forall$}}1
					{\\le}{{$\le\ $}}1
					{\\ge}{{$\ge\ $}}1
					{\\gets}{{$\gets\ $}}1
					{\\cup}{{$\cup\ $}}1
					{\\cap}{{$\cap\ $}}1
					{\\langle}{{$\langle$}}1
					{\\rangle}{{$\rangle$}}1
					{\\exists}{{$\exists\ $}}1
					{\\bot}{{$\bot$}}1
					{\\rip}{{$\rip$}}1
					{\\emptyset}{{$\emptyset$}}1
					{\\notin}{{$\notin\ $}}1
					{\\not\\exists}{{$\not\exists\ $}}1
					{\\ne}{{$\ne\ $}}1
					{\\to}{{$\to\ $}}1
					{\\implies}{{$\implies\ $}}1
					{<}{{$<\ $}}1
					{>}{{$>\ $}}1
					{=}{{$=\ $}}1
					{~}{{$\neg\ $}}1
					{|}{{$\mid$}}1
					{'}{{$^\prime$}}1
					{\\A}{{$\forall\ $}}1
					{\\E}{{$\exists\ $}}1
					{\\/}{{$\vee\,$}}1
					{\\vee}{{$\vee\,$}}1
					{/\\}{{$\wedge\,$}}1
					{\\wedge}{{$\wedge\,$}}1
					{->}{{$\rightarrow\ $}}1
					{=>}{{$\Rightarrow\ $}}1
					{->}{{$\rightarrow\ $}}1
					{<=}{{$\Leftarrow\ $}}1
					{<-}{{$\leftarrow\ $}}1
					{~=}{{$\neq\ $}}1
					{\\U}{{$\cup\ $}}1
					{\\I}{{$\cap\ $}}1
					{|-}{{$\vdash\ $}}1
					{-|}{{$\dashv\ $}}1
					{<<}{{$\ll\ $}}2
					{>>}{{$\gg\ $}}2
					{|}{{$\|$}}1
					{[}{{$[$}}1
					{]}{{$\,]$}}1
					{[[}{{$\langle$}}1
					{]]]}{{$]\rangle$}}1
					{]]}{{$\rangle$}}1
					{<=>}{{$\Leftrightarrow\ $}}2
					{<->}{{$\leftrightarrow\ $}}2
					{(+)}{{$\oplus\ $}}1
					{(-)}{{$\ominus\ $}}1
					{_i}{{$_{i}$}}1
					{_j}{{$_{j}$}}1
					{_{i,j}}{{$_{i,j}$}}3
					{_{j,i}}{{$_{j,i}$}}3
					{_0}{{$_0$}}1
					{_1}{{$_1$}}1
					{_2}{{$_2$}}1
					{_n}{{$_n$}}1
					{_p}{{$_p$}}1
					{_k}{{$_n$}}1
					{-}{{$\ms{-}$}}1
					{@}{{}}0
					{\\delta}{{$\delta$}}1
					{\\R}{{$\R$}}1
					{\\Rplus}{{$\Rplus$}}1
					{\\N}{{$\N$}}1
					{\\times}{{$\times\ $}}1
					{\\tau}{{$\tau$}}1
					{\\alpha}{{$\alpha$}}1
					{\\beta}{{$\beta$}}1
					{\\gamma}{{$\gamma$}}1
					{\\ell}{{$\ell\ $}}1
					{--}{{$-\ $}}1
					{\\TT}{{\hspace{1.5em}}}3        
				}

				\lstdefinelanguage{BigPVS}{
					basicstyle=\tt,
					keywordstyle=\sc,
					identifierstyle=\it,
					emphstyle=\tt ,
					mathescape=true,
					tabsize=20,
					sensitive=false,
					columns=fullflexible,
					keepspaces=false,
					flexiblecolumns=true,
					basewidth=0.05em,
					moredelim=[il][\rm]{//},
					moredelim=[is][\sf \figuresize]{!}{!},
					moredelim=[is][\bf \figuresize]{*}{*},
					keywords={and, 
						begin,
						cases, const,
						do, datatype,
						external, else, exists, end, endif, endcases,
						fi,for, forall, from,
						hidden,
						in, if, importing,
						let, lambda, lemma,
						measure,
						not,
						or, of,
						return, recursive,
						stop, 
						theory, that,then, type, types, type+, to, theorem,
						var,
						with,while},
					emph={nat, setof, sequence, eq, tuple, map, array, first, rest, add, enumeration, bool, real, posreal, nnreal},   
					literate=
					{(}{{$($}}1
					{)}{{$)$}}1
					{\\in}{{$\in\ $}}1
					{\\mapsto}{{$\rightarrow\ $}}1
					{\\preceq}{{$\preceq\ $}}1
					{\\subset}{{$\subset\ $}}1
					{\\subseteq}{{$\subseteq\ $}}1
					{\\supset}{{$\supset\ $}}1
					{\\supseteq}{{$\supseteq\ $}}1
					{\\forall}{{$\forall$}}1
					{\\le}{{$\le\ $}}1
					{\\ge}{{$\ge\ $}}1
					{\\gets}{{$\gets\ $}}1
					{\\cup}{{$\cup\ $}}1
					{\\cap}{{$\cap\ $}}1
					{\\langle}{{$\langle$}}1
					{\\rangle}{{$\rangle$}}1
					{\\exists}{{$\exists\ $}}1
					{\\bot}{{$\bot$}}1
					{\\rip}{{$\rip$}}1
					{\\emptyset}{{$\emptyset$}}1
					{\\notin}{{$\notin\ $}}1
					{\\not\\exists}{{$\not\exists\ $}}1
					{\\ne}{{$\ne\ $}}1
					{\\to}{{$\to\ $}}1
					{\\implies}{{$\implies\ $}}1
					{<}{{$<\ $}}1
					{>}{{$>\ $}}1
					{=}{{$=\ $}}1
					{~}{{$\neg\ $}}1
					{|}{{$\mid$}}1
					{'}{{$^\prime$}}1
					{\\A}{{$\forall\ $}}1
					{\\E}{{$\exists\ $}}1
					{\\/}{{$\vee\,$}}1
					{\\vee}{{$\vee\,$}}1
					{/\\}{{$\wedge\,$}}1
					{\\wedge}{{$\wedge\,$}}1
					{->}{{$\rightarrow\ $}}1
					{=>}{{$\Rightarrow\ $}}1
					{->}{{$\rightarrow\ $}}1
					{<=}{{$\Leftarrow\ $}}1
					{<-}{{$\leftarrow\ $}}1
					{~=}{{$\neq\ $}}1
					{\\U}{{$\cup\ $}}1
					{\\I}{{$\cap\ $}}1
					{|-}{{$\vdash\ $}}1
					{-|}{{$\dashv\ $}}1
					{<<}{{$\ll\ $}}2
					{>>}{{$\gg\ $}}2
					{|}{{$\|$}}1
					{[}{{$[$}}1
					{]}{{$\,]$}}1
					{[[}{{$\langle$}}1
					{]]]}{{$]\rangle$}}1
					{]]}{{$\rangle$}}1
					{<=>}{{$\Leftrightarrow\ $}}2
					{<->}{{$\leftrightarrow\ $}}2
					{(+)}{{$\oplus\ $}}1
					{(-)}{{$\ominus\ $}}1
					{_i}{{$_{i}$}}1
					{_j}{{$_{j}$}}1
					{_{i,j}}{{$_{i,j}$}}3
					{_{j,i}}{{$_{j,i}$}}3
					{_0}{{$_0$}}1
					{_1}{{$_1$}}1
					{_2}{{$_2$}}1
					{_n}{{$_n$}}1
					{_p}{{$_p$}}1
					{_k}{{$_n$}}1
					{-}{{$\ms{-}$}}1
					{@}{{}}0
					{\\delta}{{$\delta$}}1
					{\\R}{{$\R$}}1
					{\\Rplus}{{$\Rplus$}}1
					{\\N}{{$\N$}}1
					{\\times}{{$\times\ $}}1
					{\\tau}{{$\tau$}}1
					{\\alpha}{{$\alpha$}}1
					{\\beta}{{$\beta$}}1
					{\\gamma}{{$\gamma$}}1
					{\\ell}{{$\ell\ $}}1
					{--}{{$-\ $}}1
					{\\TT}{{\hspace{1.5em}}}3        
				}
				
				\lstdefinelanguage{pvsNums}[]{pvs}
				{
					numbers=left,
					numberstyle=\tiny,
					stepnumber=2,
					numbersep=4pt
				}
				
				\lstdefinelanguage{pvsNumsRight}[]{pvs}
				{
					numbers=right,
					numberstyle=\tiny,
					stepnumber=2,
					numbersep=4pt
				}

				\lstnewenvironment{BigPVS}%
				{\lstset{language=BigPVS}}
				{}
				
				\lstnewenvironment{PVSNums}%
				{
					\if@firstcolumn
					\lstset{language=pvs, numbers=left, firstnumber=auto}
					\else
					\lstset{language=pvs, numbers=right, firstnumber=auto}
					\fi
				}
				{}
				
				\lstnewenvironment{PVSNumsRight}%
				{
					\lstset{language=pvs, numbers=right, firstnumber=auto}
				}
				{}

				\newcommand{\linefigpvs}[9]{
					
				}

				\lstdefinelanguage{pvsproof}{
					basicstyle=\tt \figuresize,
					mathescape=true,
					tabsize=4,
					sensitive=false,
					columns=fullflexible,
					keepspaces=false,
					flexiblecolumns=true,
					basewidth=0.05em,
				}

				\lstdefinelanguage{pseudo}{
					basicstyle=\figuresize,
					keywordstyle=\bf \figuresize,
					identifierstyle=\it \figuresize,
					emphstyle=\tt \figuresize,
					mathescape=true,
					tabsize=20,
					sensitive=false,
					columns=fullflexible,
					keepspaces=false,
					flexiblecolumns=true,
					basewidth=0.05em,
					moredelim=[il][\rm]{//},
					moredelim=[is][\sf \figuresize]{!}{!},
					moredelim=[is][\bf \figuresize]{*}{*},
					keywords={automaton,and, 
						choose,const,continue, components,
						discrete, do,
						eff, external,else, elseif, evolve, end,
						fi,for, forward, from,
						hidden,
						in,input,internal,if,invariant, initially, imports,
						let,
						or, output, operators, od, of,
						pre,
						return, round,
						such,satisfies, stop, signature, simulation, 
						trajectories,trajdef, transitions, that,then, type, types, to, tasks,
						upon,
						variables, vocabulary, 
						wait, when,where, with,while},
					emph={set, seq, tuple, map, array, enumeration},   
					literate=
					{(}{{$($}}1
					{)}{{$)$}}1
					{\\in}{{$\in\ $}}1
					{\\preceq}{{$\preceq\ $}}1
					{\\subset}{{$\subset\ $}}1
					{\\subseteq}{{$\subseteq\ $}}1
					{\\supset}{{$\supset\ $}}1
					{\\supseteq}{{$\supseteq\ $}}1
					{\\forall}{{$\forall$}}1
					{\\le}{{$\le\ $}}1
					{\\ge}{{$\ge\ $}}1
					{\\gets}{{$\gets\ $}}1
					{\\cup}{{$\cup\ $}}1
					{\\cap}{{$\cap\ $}}1
					{\\langle}{{$\langle$}}1
					{\\rangle}{{$\rangle$}}1
					{\\exists}{{$\exists\ $}}1
					{\\bot}{{$\bot$}}1
					{\\rip}{{$\rip$}}1
					{\\emptyset}{{$\emptyset$}}1
					{\\notin}{{$\notin\ $}}1
					{\\not\\exists}{{$\not\exists\ $}}1
					{\\ne}{{$\ne\ $}}1
					{\\to}{{$\to\ $}}1
					{\\implies}{{$\implies\ $}}1
					{<}{{$<\ $}}1
					{>}{{$>\ $}}1
					{=}{{$=\ $}}1
					{~}{{$\neg\ $}}1
					{|}{{$\mid$}}1
					{'}{{$^\prime$}}1
					{\\A}{{$\forall\ $}}1
					{\\E}{{$\exists\ $}}1
					{\\/}{{$\vee\,$}}1
					{\\vee}{{$\vee\,$}}1
					{/\\}{{$\wedge\,$}}1
					{\\wedge}{{$\wedge\,$}}1
					{=>}{{$\Rightarrow\ $}}1
					{->}{{$\rightarrow\ $}}1
					{<=}{{$\Leftarrow\ $}}1
					{<-}{{$\leftarrow\ $}}1
					{~=}{{$\neq\ $}}1
					{\\U}{{$\cup\ $}}1
					{\\I}{{$\cap\ $}}1
					{|-}{{$\vdash\ $}}1
					{-|}{{$\dashv\ $}}1
					{<<}{{$\ll\ $}}2
					{>>}{{$\gg\ $}}2
					{|}{{$\|$}}1
					{[}{{$[$}}1
					{]}{{$\,]$}}1
					{[[}{{$\langle$}}1
					{]]]}{{$]\rangle$}}1
					{]]}{{$\rangle$}}1
					{<=>}{{$\Leftrightarrow\ $}}2
					{<->}{{$\leftrightarrow\ $}}2
					{(+)}{{$\oplus\ $}}1
					{(-)}{{$\ominus\ $}}1
					{_i}{{$_{i}$}}1
					{_j}{{$_{j}$}}1
					{_{i,j}}{{$_{i,j}$}}3
					{_{j,i}}{{$_{j,i}$}}3
					{_0}{{$_0$}}1
					{_1}{{$_1$}}1
					{_2}{{$_2$}}1
					{_n}{{$_n$}}1
					{_p}{{$_p$}}1
					{_k}{{$_n$}}1
					{-}{{$\ms{-}$}}1
					{@}{{}}0
					{\\delta}{{$\delta$}}1
					{\\R}{{$\R$}}1
					{\\Rplus}{{$\Rplus$}}1
					{\\N}{{$\N$}}1
					{\\times}{{$\times\ $}}1
					{\\tau}{{$\tau$}}1
					{\\alpha}{{$\alpha$}}1
					{\\beta}{{$\beta$}}1
					{\\gamma}{{$\gamma$}}1
					{\\ell}{{$\ell\ $}}1
					{--}{{$-\ $}}1
					{\\TT}{{\hspace{1.5em}}}3
				}





				\newcommand{\reacht}[1]{\relax\ifmmode {\sf Reach}_{#1}(t) \else ${\sf Reach}_{#1}(t)$\fi} 
				
				\newcommand{\reachi}[2]{\relax\ifmmode {\sf Reach}_{#1}(#2) \else ${\sf Reach}_{#1}(#2)$\fi} 
				
				\newcommand{\breach}[2]{\relax\ifmmode {\sf Reach}^{#2}_{#1} \else ${\sf Reach}^{#2}_{#1}$\fi} 
				
				\newcommand{\breachi}[3]{\relax\ifmmode {\sf Reach}^{#2}_{#1}(#3) \else ${\sf Reach}^{#2}_{#1}(#3)$\fi} 
				
				\newcommand{\breacht}[2]{\relax\ifmmode {\sf Reach}^{#2}_{#1}(t) \else ${\sf Reach}^{#2}_{#1}(t)$\fi}












\begin{document}	

\makeatletter
\g@addto@macro \normalsize {%
 \setlength\abovedisplayskip{1pt}%
 \setlength\belowdisplayskip{1pt}%
}
\makeatother

\setlength{\textfloatsep}{1pt}  
\renewcommand{\baselinestretch}{1.0}

\title{Approximate Partial Order Reduction\thanks{This work is supported by the grants CAREER 1054247 and 
CCF 1422798 from the National Science Foundation}}
%
\author{Chuchu Fan, Zhenqi Huang, and Sayan Mitra}
%
%
%
\institute{University of Illinois at Urbana-Champaign, ECE Department\\
\email{\{cfan10, zhuang25, mitras\}@illinois.edu}}

\maketitle              

\begin{abstract}
We present a new partial order reduction method for reachability analysis of nondeterministic labeled transition systems over metric spaces. Nondeterminism arises from both the choice of the initial state and the choice of  actions, and the number of executions to be explored grows exponentially with their length. 
We introduce a notion of $\varepsilon$-independence relation over actions that relates approximately commutative actions; $\varepsilon$-equivalent action sequences are obtained by  swapping $\varepsilon$-independent consecutive action pairs.
Our reachability algorithm generalizes individual executions to cover sets of  executions that start from  different, but $\delta$-close initial states, and 
follow different, but $\varepsilon$-independent, action sequences.
The constructed over-approximations can be made arbitrarily precise by reducing the $\delta,\varepsilon$ parameters. 
Exploiting both the continuity of actions and their approximate independence, the algorithm can yield an exponential reduction in the number of executions  explored. We illustrate this with experiments on consensus, platooning, and distributed control examples.
\end{abstract}
%

\section{Introduction}

Actions of different computing nodes interleave arbitrarily in distributed systems. The number of action sequences  that have to be examined for state-space exploration grows exponentially with the number of nodes. 
%
Partial order reduction methods tackle this combinatorial explosion by eliminating executions that are {\em equivalent\/}, i.e., do not provide new information about reachable states ~(see~\cite{godefroid1996partial,peled1998ten,kurshan1998static} and the references therein). 
This equivalence is based on {\em independence\/} of actions: a pair of actions are independent if they commute, i.e., applying them in any order results in the same state. 
Thus, of all execution branches that start and end at the same state,  but perform commuting actions in different order, only one has to be explored.
%
Partial order reduction methods have become standard tools for practical software verification. They have been successfully applied to election protocols~\cite{alur1997partial}, indexers~\cite{flanagan2005dynamic}, file systems~\cite{clarke1999state}, security protocol~\cite{clarke2000partial}, distributed schedulers~\cite{baier2004partial}, among many others.

Current partial order methods are limited when it comes to computation with numerical data and physical quantities (e.g., sensor networks, vehicle platoons, IoT applications, and distributed control and monitoring systems).
First, a pair of actions are considered independent only if they commute exactly; actions that nearly commute---as are common in these applications---cannot be exploited for pruning the exploration. 
%
Second, conventional partial order methods do not eliminate executions that start from nearly similar states and experience equivalent action sequences. 
%

We address these  limitations and propose a state space exploration method for nondeterministic, infinite state transition systems based on {\em approximate partial order reduction\/}. 
%
Our setup has two mild assumptions: (i) the state space of the transition system has a discrete part $L$ and a continuous part $X$ and the latter is equipped with a metric; (ii) the actions on  $X$ are continuous functions.
%
Nondeterminism arises from both the choice of the initial state and the choice of actions. 
Fixing an initial state $q_0$ and a sequence of actions $\tau$ (also called a {\em trace}), uniquely defines an execution of the system which we denote by $\xi_{q_0, \tau}$.
For a given approximation parameter $\varepsilon\geq 0$, we define two actions $a$ and $b$ to be $\varepsilon$-independent if from any state $q$, the continuous parts of states resulting from applying action sequences $ab$ and $ba$ are  $\varepsilon$-close.
Two {\em traces} of $\A$ are  $\varepsilon$-equivalent if they result from permuting $\varepsilon$-independent actions.
To compute the reachable states of $\A$ using a finite (small) number of executions, the key is to generalize or expand an execution $\xi_{q_0, \tau}$ by a factor $r \geq 0$, so that, this expanded set contains all executions that start $\delta$-close to $q_0$ and experience action sequences that are $\varepsilon$-equivalent to $\tau$. We call this $r$ a {\em $(\delta,\varepsilon)$-trace equivalent discrepancy factor ($\bloatf$)} for $\xi$.

For a fixed  trace $\tau$, the only source of nondeterminism is the choice of the initial state. The reachable states from $B_\delta(q_0)$---a $\delta$-ball around $q_0$---can be over-approximated by expanding $\xi_{q_0, \tau}$ by a  $(\delta,0)$-$\bloatf$. This is essentially the sensitivity of $\xi_{q_0, \tau}$ to $q_0$. Techniques for computing it are now well-developed for a broad class of models ~\cite{DMVemsoft2013,breachAD,c2e2tacas,FanMitra2015}.

Fixing $q_0$, the only source of nondeterminism is the possible sequence of actions in $\tau$.
The reachable states from $q_0$ following all possible valid traces can be over-approximated by expanding $\xi_{q_0,\tau}$ by a $(0,\varepsilon)$-$\bloatf$, which includes states reachable by all $\varepsilon$-equivalent action sequences. Computing  $(0,\varepsilon)$-$\bloatf$ uses the principles of partial order reduction. However, unlike exact equivalence, here,  starting from the same state, the states reached at the end of executing two $\varepsilon$-equivalent traces are not necessarily identical. 
This breaks a key assumption necessary for conventional partial order algorithms: here, an action enabled after $ab$ may not be enabled after $ba$. Of course, considering disabled actions can still give  over-approximation of reachable states, but, we show that the precision of approximation can be improved arbitrarily by shrinking $\delta$ and $\varepsilon$.

Thus, the reachability analysis in this paper brings together two different ideas for handling nondeterminism:  it combines sensitivity analysis with respect to initial state and $\varepsilon$-independence of actions in computing  $(\delta,\varepsilon)$-$\bloatf$, i.e., 
 upper-bounds on the distance between executions starting from initial states that are $\delta$-close to each other and follow $\varepsilon$-equivalent action sequences (Theorem~\ref{them:soundness}). 
As a matter of theoretical interest, we  show that the approximation error can be made arbitrarily small by choosing sufficiently small $\delta$ and $\varepsilon$  (Theorem~\ref{them:precision}).
%
We validate the correctness and effectiveness of the algorithm with three case studies where conventional partial order reduction would not help: an iterative consensus protocol, a simple vehicle platoon control system, and a distributed building heating system. In most cases, our reachability  algorithm reduces the number of explored executions by a factor of $O(n!)$, for a time horizon of $n$, compared with exhaustive enumeration. 
Using these over-approximations, we could quickly decide safety verification questions.
%
These examples illustrate that our method has the potential to improve verification of  a broader range of distributed systems for consensus~\cite{blondel2005convergence,fang2005information,olfati2007consensus,mitra2008formalized}, synchronization~\cite{welch1988new,rhee2009clock} and control~\cite{fehnker2004benchmarks,mitra1994asynchronous}. 


\paragraph{Related work.}
There are two main classes of partial order reduction methods. 
The {\em persistent/ample set} methods compute a subset of enabled transitions --the persistent set (or ample set)-- such that the omitted transitions are independent to those selected~\cite{clarke1999model,alur1997partial}.
The reduced system which only considers the transitions in the persistent set is guaranteed to represent all behaviors of the original system.
The persistent sets and the reduced systems are often derived by static analysis of the code.
More recently, researchers have developed dynamic partial order reduction methods using the {\em sleep set} to avoid the static analysis~\cite{yang2008efficient,abdulla2014optimal,flanagan2005dynamic}.
These methods examine the history of actions taken by an execution and decide a set of actions that need to be explored in the future.
The set of omitted actions is the sleep set.
In \cite{cassez2015verification}, \chuchu{Cassez and Ziegler} introduce a method to apply symbolic partial order reduction to infinite state discrete systems.

Analysis of sensitivity and the related notion of robustness analysis  functions, automata, and executions has recently received significant attention~\cite{chaudhuri2012continuity,breachAD,samanta2013robustness}.
\chuchu{Majumdar and Saha}~\cite{majumdar2009symbolic} present an algorithm to compute the output deviation with bounded disturbance combining symbolic execution and optimization.
In~\cite{chaudhuri2012continuity} and~\cite{samanta2013robustness}, \chuchu{Chaudhuri etc.,} present algorithms for robustness analysis of programs and  networked systems.
Automatic techniques for local sensitivity analysis combining simulations and static analysis and their applications to verification of hybrid systems have been presented in~\cite{breachAD,c2e2tacas,FanMitra2015}.

In this paper, instead of conducting conventional partial order reduction, we propose a novel method of approximate partial order reduction, and combine it with sensitivity analysis for reachability analysis and safety verification for a broader class of systems. 

\section{Preliminaries}
\label{por:sec:infts}

\paragraph{Notations.}
The state of our labeled transition system is defined by the valuations of a set of variables. Each variable $v$ has a type, $\mathit{type}(v)$, which is either the set of reals or some finite set. For a set of variables $V$, a valuation $\vv$ maps each $v \in V$ to a point in $\mathit{type}(v)$. The set of all valuations of $V$ is $\val{V}$.
$\reals$  denotes the set of reals, $\nnreals$ the set of non-negative reals, 
and $\naturals$ the set of natural numbers. 
For $n\in \naturals$, $[n] = \{0,\ldots, n-1\}$.
The spectral radius $\rho(A)$ of a square matrix $A \in \reals^{n \times n}$ is the largest absolute value of its eigenvalues. A square matrix $A$ is {\em stable} if its spectral radius $\rho(A) <1$.
For a set of tuples $S  = \{\langle s_{j1},\dots,s_{jn} \rangle_j\}$, $S \restr i$ denotes the set $\{s_{ji}\}$ which is the set obtained by taking the $i^{th}$ component of each tuple in $S$. 

\subsection{Transition systems}
\label{por:sec:model}
\begin{definition}
A {\em labeled transition system\/} $\A$ is a tuple $\langle X\cup L, \Theta, A, \arrow{} \rangle$ where
\begin{inparaenum}[(i)]
\item $X$ is a set of {\em real-valued variables} and $L$ is a set of {\em finite-valued variables}.
$Q = \val {X\cup L}$ is the {\em set of states}, 
\item \chuchu{$\Theta \subseteq Q$ is a set of {\em initial states} such that the sets of real-valued variables are compact},
\item $A$ is a finite set of {\em actions}, and
\item $\rightarrow \subseteq Q \times A \times Q$ is a {\em transition relation}.
\end{inparaenum}
\end{definition}
A state $q\in Q$ is a valuation of  the real-valued and finite-valued variables. We denote by $q.X$ and $q.L$, respectively, the real-valued and discrete (finite-valued) parts of the state $q$. 
We will view the continuous part $q.X$  as a vector in $\reals^{|X|}$ by fixing an arbitrary ordering of $X$.
The norm $|\cdot|$ on $q.X$ is an arbitrary norm unless stated otherwise.
%
For $\delta \geq 0$, the {\em $\delta$-neighborhood} of $q$ is denoted by $\B_{\delta}(q) \deq \{q'\in Q :  q'.L=q.L \wedge |q'.X-q.X| \leq \delta\}$.
For any $(q,a,q')\in\arrow{}$, we write $q\arrow a q'$.
For any action $a\in A$, its guard is the set  $\mathit{guard}(a) = \set{q\in Q}{\exists q'\in Q, q \arrow a q'}$. 
	We assume that guards are closed sets.
%
An action $a$ is {\em deterministic} if for any state $q\in Q$, if there exists $q_1, q_2\in Q$ with $q \arrow a q_1$ and $q \arrow a q_2$, then $q_1 = q_2$.
\begin{assumption}
\label{as:det}
\begin{inparaenum}[(i)]
\item Actions are deterministic.
For notational convenience, the name of an action $a$ is identified with  its transition function, i.e.,  for each $q\in \mathit{guard}(a)$, $q \arrow a a(q)$. 
We extend this notation to all states, i.e., even those outside $\mathit{guard}(a)$.
\item  For any state pair $q, q'$, if $q.L=q'.L$ then   $a(q).L = a(q').L$.
\end{inparaenum}
\end{assumption}

%




\paragraph{Executions and traces.}
For a deterministic transition system, a state $q_0\in Q$ and a finite action sequence (also called a {\em trace}) $\tau = a_0a_1\dots a_{n-1}$ uniquely specifies a {\em potential execution\/} $\xi_{q_0,\tau} = q_0, a_0, q_1, a_1, \dots, a_{n-1}, q_n$ where for each $i\in[n],$ 
$a_i(q_i) = q_{i+1}$. 
	A {\em valid execution\/} (also called execution for brevity) is a potential execution with 
(i) $q_0 \in \Theta$ and 
(ii) for each $i\in[n]$, $q_i\in \mathit{guard}(a_i)$.  That is, a valid execution is a potential execution starting from the initial set with each action $a_i$ enabled at state $q_i$.
For any potential execution $\xi_{q_0,\tau}$, its {\em trace\/} is the  action sequence $\tau$, i.e., $\trace{\exe{q_0,\tau}}= \tau \in A^*$.
We denote by $len(\tau)$ the length of $\tau$. For any for $i\in [len(\tau)]$, $\tau(i)$ is the $i$-{th} action in $\tau$.
The length  of $\xi_{q_0,\tau}$ is the length of its trace and $\xi_{q_0,\tau}(i) = q_i$ is the state visited after the $i$-th transition. 
The first and last state on a execution $\xi$ are denoted as $\xi.\fstate = \xi(0)$ and $\xi.\lstate = \xi(len(\xi))$.

For a subset of initial states $S\subseteq \Theta$ and a time bound $T\geq 0$, $\execs{S,T}$ is the set of length $T$ executions starting from  $S$. 
We denote the {\em reach set at time $T$} by $\reach{}{S,T} \deq \set{\xi.\lstate}{\xi \in \execs{S,T}}$.
Our goal is to precisely over-approximate  $\reach{}{\Theta,T}$ exploiting  partial order reduction.
\vspace{-0.2in}
\begin{figure}[bhtp]
\centering
  \hrule
  \two{.47}{.47}
  {\lstinputlisting[language=ioaNums,lastline=7]{linear}}
  {\lstinputlisting[language=ioaNumsRight,firstline=8]{linear}}
  \hrule
  \caption{\small Labeled transition system model of iterative consensus.}
  \label{por:hioa:consensus}
\end{figure}
\vspace{-0.25in}

\begin{example}[Iterative consensus]
\label{por:ex:linear}
An $n$-dimensional iterative consensus protocol with $N$ processes is shown in Figure~\ref{por:hioa:consensus}. 
The real-valued part of state is a vector $x$ in $\reals^n$
and each process $i$ changes the state by the linear transformation $x \gets A_i x$. 
The system evolves in rounds: in each round, each process $i$ updates the state exactly once but in arbitrary order.
The boolean vector $d$ marks the processes that have acted in a round.
\chuchu{The set of actions is $\{a_i\}_{i\in[N]} \cup \{a_\bot\}$.}
For each $i \in [N]$, the action $a_i$ is enabled when $d[i]$ is $\mathit{false}$ and 
when it occurs $x$ is updated as $A_i x$, where $A_i$ is an $n\times n$ matrix.
The action $a_\bot$ can occur only when all $d[i]$'s are set to $\mathit{true}$ and it 
resets all the $d[i]$'s to $\mathit{false}$. 
For an instance with $N=3$, a valid execution could have the trace $\tau = a_0a_2a_1a_\bot a_1a_0a_2a_\bot$. 
It can be checked that Assumption~\ref{as:det} holds.
In fact, the assumption will continue to hold if $A_i x$ is replaced by a nonlinear transition function $a_i:\reals^n\rightarrow \reals^n$.

%
\end{example}

\subsection{Discrepancy functions}
\label{sec:discrepancy}
A discrepancy function bounds the changes in a system's executions as a continuous  function of the changes in its inputs.  
Methods for computing discrepancy of dynamical and hybrid systems are now well-developed~\cite{huang2014proofs,c2e2tacas,donze2007systematic}. We extend the notion naturally to labeled transition systems: a discrepancy for an action bounds the changes in the continuous state brought about by its transition function. 


\begin{definition}
\label{por:def:discrep}
For an action $a\in A$, a continuous function $\beta_a : \nnreals \rightarrow \nnreals$ is a {\em discrepancy function} if for any pair of states $q,q'\in Q$ with $q.L=q'.L$,
\begin{inparaenum}[(i)]
\item $|a(q).X - a(q').X| \leq \beta_a(|q.X - q'.X|)$, and
\item $\beta_a(\cdot) \rightarrow 0$ as $|q.X - q'.X| \rightarrow 0$.
\end{inparaenum}
\end{definition}
Property~(i) gives an upper-bound on the changes brought about by action $a$ and (ii) ensures that the bound given by $\beta_a$ can be made arbitrarily precise. 
If the action $a$ is Lipschitz continuous with Lipschitz constant $L_a$, then $\beta_a(|q.X - q'.X|) = L_a (|q.X - q'.X|)$ can be used as a discrepancy function. Note that we do not assume the system is stable.
%
As the following proposition states, 
given discrepancy functions for actions, we can reason about distance between executions that share the same trace but have different initial states. 
\begin{proposition}
\label{por:prop:discrep}
Suppose each action $a \in A$ has a discrepancy function $\beta_a$. 
For any $T\geq 0$ and action sequence $\tau  = a_0a_1a_2\dots a_{T}$, and for any pair of states $q,q'\in Q$ with $q.L=q'.L$, 
the last states of the pair of potential executions satisfy:
\begin{align}
\xi_{q,\tau}.\lstate.L &= \xi_{q',\tau}.\lstate.L, \\
|\xi_{q,\tau}.\lstate.X - \xi_{q',\tau}.\lstate.X| & \leq \beta_{a_T}\beta_{a_{T-1}}\dots\beta_{a_0}(|q.X-q'.X|).
\end{align}
\end{proposition}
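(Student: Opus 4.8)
The plan is to prove both claims simultaneously by induction on the position $i$ along the two potential executions. Write $\tau = a_0 a_1 \cdots a_T$ and let $q_i \deq \xi_{q,\tau}(i)$ and $q_i' \deq \xi_{q',\tau}(i)$ denote the states reached after $i$ transitions from $q$ and $q'$ respectively, so that $q_0 = q$, $q_0' = q'$, and $q_{i+1} = a_i(q_i)$, $q_{i+1}' = a_i(q_i')$ by determinism (Assumption~\ref{as:det}(i)). I will establish the strengthened invariant that for every $i \in \{0, \dots, T+1\}$ we have (a) $q_i.L = q_i'.L$ and (b) $|q_i.X - q_i'.X| \leq (\beta_{a_{i-1}} \circ \cdots \circ \beta_{a_0})(|q.X - q'.X|)$, where the empty composition ($i=0$) is read as the identity. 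The two assertions of the proposition are then exactly the case $i = T+1$, since $\xi_{q,\tau}.\lstate = q_{T+1}$.

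The base case $i = 0$ is immediate: claim (a) is the hypothesis $q.L = q'.L$, and claim (b) is the trivial inequality $|q.X - q'.X| \le |q.X - q'.X|$. For the inductive step, suppose (a) and (b) hold at index $i$. Because $q_i.L = q_i'.L$, Assumption~\ref{as:det}(ii) applied to action $a_i$ yields $a_i(q_i).L = a_i(q_i').L$, i.e. $q_{i+1}.L = q_{i+1}'.L$, which is (a) at $i+1$. For (b), the same equality $q_i.L = q_i'.L$ licenses the use of the discrepancy bound of Definition~\ref{por:def:discrep}(i) for $\beta_{a_i}$, giving $|q_{i+1}.X - q_{i+1}'.X| = |a_i(q_i).X - a_i(q_i').X| \le \beta_{a_i}(|q_i.X - q_i'.X|)$. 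Combining this with the induction hypothesis (b) at $i$ is what closes the induction.

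The one step that needs care — and the only place where the argument is not purely formal — is chaining this last inequality through $\beta_{a_i}$: from $|q_i.X - q_i'.X| \le M_i$, with $M_i = (\beta_{a_{i-1}} \circ \cdots \circ \beta_{a_0})(|q.X-q'.X|)$, I want to conclude $\beta_{a_i}(|q_i.X - q_i'.X|) \le \beta_{a_i}(M_i)$, so that the composition telescopes. This requires each discrepancy function to be monotonically non-decreasing, a property not stated explicitly in Definition~\ref{por:def:discrep} but satisfied by the canonical Lipschitz choice $\beta_a(s) = L_a s$ and standard for class-$\mathcal{K}$ discrepancy functions. I would therefore either add non-decreasing monotonicity to the hypotheses on $\beta_a$ (the natural fix, consistent with the examples used here) or, failing that, replace each $\beta_a$ by its non-decreasing envelope $\tilde\beta_a(s) = \sup_{0 \le u \le s} \beta_a(u)$; since $\beta_a$ is continuous with $\beta_a(s)\to 0$ as $s\to 0$, the envelope $\tilde\beta_a$ is continuous, non-decreasing, dominates $\beta_a$, and still satisfies Definition~\ref{por:def:discrep}(i)--(ii), so the stated bound continues to hold verbatim. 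With monotonicity in hand the composition collapses as required and the induction, hence the proposition, is complete.
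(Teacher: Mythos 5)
Your induction is correct, and it is essentially the argument the paper leaves implicit: Proposition~\ref{por:prop:discrep} is stated in the paper with no proof at all, and your write-up supplies the obvious one — part (a) of your invariant from Assumption~\ref{as:det}(ii), part (b) from Definition~\ref{por:def:discrep}(i), chained along the trace.

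The monotonicity point you raise is a genuine gap in the paper's definitions, not an artifact of your proof strategy, and it is worth making precise. Definition~\ref{por:def:discrep} indeed never requires $\beta_a$ to be non-decreasing, and without that requirement the proposition is false, not merely unprovable. Concretely, take $X=\reals$ with no discrete part, let $a_0 : x \mapsto x/2$ with the valid discrepancy $\beta_{a_0}(s) = s$, and let $a_1$ be the period-$2$ triangle wave $x \mapsto \lvert (x \bmod 2) - 1 \rvert$ with its tight discrepancy $g$, where $g(s)=s$ on $[0,1]$, $g(s)=2-s$ on $[1,2]$, extended $2$-periodically; $g$ is continuous, vanishes at $0$, and satisfies Definition~\ref{por:def:discrep}(i). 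Starting from $q.X = 0$ and $q'.X = 2$, the two potential executions end at distance $1$, yet the composed bound is $\beta_{a_1}(\beta_{a_0}(2)) = g(2) = 0$. So your first fix — adding non-decreasingness to the definition — is the right repair, and it is harmless since the Lipschitz discrepancies $\beta_a(s) = L_a s$ used in all of the paper's examples are monotone. Note also that the paper relies on this same implicit assumption downstream: the step $\beta_{b_1}\beta_{b_2}\cdots\beta_{b_{n-1}}(|q_2'.X - q_2''.X|) \leq \beta^{n-1}(\varepsilon)$ in the proof of Lemma~\ref{por:lem:insert}, and the steps bounding $\beta_a(|q_n.X - q_n''.X|)$ by $\beta_a(r)$ in Lemmas~\ref{por:lem:perm} and~\ref{por:lem:sound}, all apply a discrepancy function to an upper bound on its argument. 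One caveat on your fallback envelope fix: replacing $\beta_a$ by $\tilde\beta_a(s) = \sup_{0 \leq u \leq s}\beta_a(u)$ proves the proposition for the envelopes (which are themselves valid discrepancy functions), but not \emph{verbatim} for the original non-monotone $\beta_a$'s — the counterexample above shows no argument could achieve that.
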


\begin{example}
\label{por:ex:linear1}
Consider an instance of $\auto{Consensus}$ of Example~\ref{por:ex:linear} with $n = 3$ and $N = 3$ with the standard $2$-norm on $\reals^3$.
Let the matrices $A_i$ be
\[
\scriptsize
A_0 =
\begin{bmatrix}
0.2  &-0.2&  -0.3\\
  -0.2  & 0.2 & -0.1\\
  -0.3  &-0.1 &  0.3
\end{bmatrix}, 
A_1 = 
\begin{bmatrix}
0.2  & 0.3  & 0.2\\
   0.3 & -0.2  & 0.3\\
   0.2  & 0.3  & 0
\end{bmatrix}, 
A_2 =
\begin{bmatrix}
-0.1  & 0  & 0.4\\
   0 & 0.4 & -0.2\\
   0.4&  -0.2 & -0.1
\end{bmatrix}.
\]
\normalsize
It can be checked that for any  pair $q,q'\in Q$ with $q.L=q'.L$,
$|a_i(q).X - a_i(q').X|_2 \leq |A_i|_2 |q.X-q'.X|_2$.
Where the induced 2-norms of the matrices are $|A_0|_2 = 0.57,|A_1|_2=0.56,|A_2|_2=0.53$.
Thus, for any $v\in\nnreals$, we can use  discrepancy functions for $a_0,a_1,a_2$: $\beta_{a_0}(v) =  0.57v, \beta_{a_1}(v) = 0.56 v, \mbox{ and } \beta_{a_2}(v) = 0.53v$.
\end{example}
\chuchu{For actions with nonlinear transition functions, computing global discrepancy functions is difficult in general but local approaches using the eigenvalues of the Jacobian matrices are adequate for computing reachable sets from compact initial sets~\cite{FanMitra2015,huang2015simulation}.}

\subsection{Combining sets of discrepancy functions}
For a finite set of discrepancy functions $\{\beta_a\}_{a\in A'}$ 
corresponding to a set of actions $A' \subseteq A$, 
we define $\beta_{max} = \max_{a\in A'}\{ \beta_a\}$ as $\beta_{max}(v) = \max_{a \in A'} \beta_a(v)$, for each $v \geq 0$. 
%
From Definition~\ref{por:def:discrep}, for each $a\in S$, $\beta_a(|q.X-q'.X|)\rightarrow 0$ as $|q.X-q'.X|\rightarrow 0$. Hence, as the maximum of $\beta_a$, we have $\beta_{max}(|q.X-q'.X|)\rightarrow 0$ as $|q.X-q'.X|\rightarrow 0$.
It can be checked that $\beta_{max}$ is a discrepancy function of each $a\in S$.

%
%

For $n\geq 0$ and a function $\beta_{max}$ defined as above, we define a function $\gamma_n =\sum_{i=0}^n \beta_{max}^i$; here $\beta^i = \beta \circ \beta^{i-1}$ for $i \geq1 $ and $\beta^0$ is the identity mapping.
Using the properties of discrepancy functions as in Definition~\ref{por:def:discrep}, we can show the following properties of $\{\gamma_n\}_{n\in \naturals}$.
\begin{proposition}
	\label{por:prop:gamma}
	Fix a finite set of discrepancy functions $\{\beta_a\}_{a\in A'}$ with $A'\subseteq A$.
	Let $\beta_{max} = \max_{a\in A'}\{\beta_a\}$. 
	For any $n\geq 0$, $\gamma_n = \sum_{i =0}^n \beta_{max}^i$ satisfies
	\begin{inparaenum}[(i)]
		\item $\forall \ \varepsilon\in \nnreals$ and any $n\geq n' \geq 0$, $\gamma_{n}(\varepsilon) \geq \gamma_{n'}(\varepsilon)$, and
		\item $\lim_{\varepsilon\rightarrow 0}\gamma_n(\varepsilon) = 0$.
	\end{inparaenum}
\end{proposition}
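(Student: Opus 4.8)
The plan is to treat the two claims separately, since both reduce to elementary facts about the finite sum $\gamma_n = \sum_{i=0}^n \beta_{max}^i$ once we record what was already argued in the paragraph preceding the statement: $\beta_{max}$ is itself a discrepancy function in the sense of Definition~\ref{por:def:discrep}, so it is continuous, maps $\nnreals$ into $\nnreals$, and satisfies $\beta_{max}(v)\to 0$ as $v\to 0$. These three facts are all I expect to need.

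For part (i), I would write the difference telescopically. For $n \ge n' \ge 0$ we have $\gamma_n(\varepsilon) - \gamma_{n'}(\varepsilon) = \sum_{i=n'+1}^{n}\beta_{max}^i(\varepsilon)$. Each summand lies in $\nnreals$ because $\beta_{max}$ takes values in $\nnreals$, and the identity term $\beta_{max}^0(\varepsilon)=\varepsilon$ is non-negative as well since $\varepsilon\in\nnreals$. Hence the difference is a finite sum of non-negative terms, so $\gamma_n(\varepsilon) \ge \gamma_{n'}(\varepsilon)$. Notably no monotonicity of $\beta_{max}$ itself is required here; only that its range is contained in $\nnreals$.

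For part (ii), the crucial observation is that continuity of $\beta_{max}$ together with $\beta_{max}(v)\to 0$ as $v\to 0$ forces $\beta_{max}(0)=0$, so that $0$ is a fixed point at which $\beta_{max}$ is continuous. I would then prove by induction on $i$ that $\lim_{\varepsilon\to 0}\beta_{max}^i(\varepsilon)=0$ for every $i \ge 0$. The base case $i=0$ is immediate from $\beta_{max}^0(\varepsilon)=\varepsilon$, and $i=1$ is precisely the defining limit of the discrepancy function $\beta_{max}$. For the inductive step, assuming $\beta_{max}^{i-1}(\varepsilon)\to 0$, I would compose with continuity of $\beta_{max}$ at $0$: as $\varepsilon\to 0$ the inner quantity $\beta_{max}^{i-1}(\varepsilon)$ tends to $0$, hence $\beta_{max}^i(\varepsilon)=\beta_{max}(\beta_{max}^{i-1}(\varepsilon))$ tends to $\beta_{max}(0)=0$. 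Since $\gamma_n$ is a finite sum over $i\in\{0,\dots,n\}$ of such terms, its limit as $\varepsilon\to 0$ is the sum of the individual limits, namely $0$.

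The one step requiring care, and the one I expect to be the main obstacle, is the inductive composition in part (ii): one must invoke continuity of $\beta_{max}$ at the specific point $0$, not merely its limiting behaviour, so that the composition of limits is legitimate. This is exactly where $\beta_{max}(0)=0$ and continuity at $0$ enter, both of which follow from $\beta_{max}$ being a discrepancy function. Finiteness of $n$ is also essential, as it lets me exchange the limit with the sum termwise without any uniform-convergence argument.
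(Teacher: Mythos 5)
Your proof is correct and follows essentially the same route as the paper's: part (i) by observing that the increments $\beta_{max}^i(\varepsilon)$ are non-negative, and part (ii) by induction on the iterates $\beta_{max}^i$ followed by exchanging the limit with the finite sum. Your explicit appeal to $\beta_{max}(0)=0$ and continuity at $0$ to justify the composition of limits is a careful refinement of a step the paper states only as ``by induction on the nested functions,'' but it is the same argument.
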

\begin{proof}
$(i)$ For any $n\geq 1$, we have $\gamma_n - \gamma_{n-1} = \beta_{max}^n$.
Since $\beta_{max}^n = \max_{a\in S}\{\beta_a\}$ for some finite $S$, using Definition~\ref{por:def:discrep},
$\beta_{max}^n$ takes only non-negative values.
Hence, the sequence of functions $\{\gamma_n\}_{n\in \nnreals}$ is non-decreasing. \\
$(ii)$ 
Using the property of discrepancy functions, we have $\lim_{\varepsilon\rightarrow 0}\beta_{max}(\varepsilon) = 0$.
By induction on the nested functions, we have $\lim_{\varepsilon\rightarrow 0}\beta_{max}^i(0)$ for any $i\geq 0$.
Hence for any $n\in\nnreals$, $\lim_{\varepsilon\rightarrow 0}\gamma_n(\varepsilon) = \lim_{\varepsilon\rightarrow 0} \sum_{i=0}^{n} \beta_{max}^i(\varepsilon) = 0$.
\qed
\end{proof}
The function $\gamma_n$ depends on the set of $\{ \beta_a\}_{a \in A'}$, but as the $\beta$s will be fixed and clear from context, we write $\gamma_n$ for brevity.

\section{Independent actions and neighboring executions} 
\label{por:sec:indact}

Central to partial order methods is the notion of independent actions. A pair of actions are independent if from any state, the occurrence of the two actions, in either order, results in the  same state. \chuchu{We extend this notion and define a pair of actions to be  $\varepsilon$-independent (Definition~\ref{por:def:appind}), for some $\varepsilon >0$, if the continuous states resulting from swapped action sequences are within $\varepsilon$ distance.}

\subsection{Approximately independent actions}
\begin{definition}
\label{por:def:appind}
For $\varepsilon \geq 0 $, two distinct actions $a, b \in A$ are {\em $\varepsilon$-independent}, denoted by $a\mysim{\varepsilon} b$, if for any state $q\in Q$
\begin{inparaenum}[(i)]
\item (Commutativity) $ab(q).L =  ba(q).L$, and
\item (Closeness) $|ab(q).X - ba(q).X| \leq \varepsilon$.
\end{inparaenum}
\end{definition}
The parameter $\varepsilon$ captures the degree of the approximation.
Smaller the value of $\varepsilon$, more restrictive the independent relation.
If $a$ and $b$ are $\varepsilon$-independent with $\varepsilon=0$, then $ab(q) = ba(q)$
and the actions are independent in the standard sense (see e.g. Definition 8.3 of~\cite{baier2008principles}). 
%
%
Definition~\ref{por:def:appind} extends the standard definition in two ways. 
First,  $b$ need not be  enabled at state $a(q)$, and vice versa.
That is, if $\exe{q_0,ab}$ is an execution, we can only infer that $\exe{q_0,ba}$ is a  potential execution and not necessarily an execution.
Secondly, with $\varepsilon > 0$,  the continuous states can mismatch by $\varepsilon$ when $\varepsilon$-independent actions are swapped.  Consequently, an action $c$ may be enabled at $ab(q)$ but not at $ba(q)$. If $\exe{q_0,abc}$ is a valid execution, we can only infer that
$\exe{q_0,bac}$ is a  potential execution and not necessarily an execution.
%


We assume that the parameter $\varepsilon$ does not depend on the state $q$. When computing the value of $\varepsilon$ for concrete systems, we could first find an invariant for the state's real-valued variable $q.X$ such that $q.X$ is bounded, then find an upper-bound of  $|ab(q).X - ba(q).X|$ as $\varepsilon$. For example, if $a$ and $b$ are both linear mappings with $a(q).X  = A_1 q.X + b_1$ and $b(q).X = A_2 q.X + b_2$ and there is an invariant for $q.X$ is such that $|q.X| \leq r $, then it can be checked that  $|ab(q).X - ba(q).X| = | (A_2A_1 - A_1A_2) q.X + (A_2b_1 - A_1 b_2 + b_2 - b_1)| \leq  | A_2A_1 - A_1A_2|r + |A_2b_1 - A_1 b_2 + b_2 - b_1|$.

For a trace  $\tau \in A^*$ and an action $a\in A$,
$\tau$ is $\varepsilon$-independent to $a$, written as $\tau\mysim\varepsilon a$, if $\tau$ is empty string or for every $i\in[len(\tau)]$, $\tau(i)\mysim{\varepsilon} a$.
It is clear that the approximate independence relation over $A$ is symmetric, but  not necessarily transitive. 

\begin{example}
\label{por:ex:linear2}
Consider approximate independence of actions in $\auto{Consensus}$.
Fix any $i,j\in[N]$ such that $i\neq j$ and any state $q\in Q$.
It can be checked that: 
$a_ia_j(q).d[k] = a_ja_i(q).d[k] = \true$
$\mathit{if} \  k\in\{i,j\}$, otherwise
it is 
$q.d[k]$.
Hence, we have $a_ia_j(q).d=a_ja_i(q).d$ and the commutativity condition of Definition~\ref{por:def:appind} holds.
For the closeness condition, we have
$ |a_ia_j(q).x-a_ja_i(q).x|_2 = |(A_iA_j - A_jA_i)q.x|_2 \leq |A_iA_j-A_jA_i|_2 |q.x|_2.$
If the matrices $A_i$ and $A_j$ commute, then $a_i$ and $a_j$ are $\varepsilon$-approximately independent with $\varepsilon = 0$.  

Suppose initially $x\in[-4,4]^3$ then the 2-norm of the initial state is bounded by the value $4\sqrt 3$. The specific matrices $A_i, i \in [3]$ presented in Example~\ref{por:ex:linear1} are all stable, so $|a_i(q).x|_2 \leq |q.x|_2$, for each $i \in [3]$ and 
the norm of state is non-increasing in any transitions. Therefore, $Inv=\{x\in\reals^3: |x|_2 \leq 4\sqrt 3\}$ is an invariant of the system. 
Together, we have $|a_0a_1(q).x-a_1a_0(q).x|_2 \leq 0.1$, $|a_0a_2(q).x-a_2a_0(q).x|_2 \leq 0.07$, and $|a_1a_2(q).x-a_2a_1(q).x|_2 \leq 0.17$.
Thus, with $\varepsilon = 0.1$, it follows that $a_0\mysim\varepsilon a_1$ and 
$a_0\mysim\varepsilon a_2$ and $\mysim\varepsilon$ is not transitive, but with 
$\varepsilon=0.2$, $\mysim\varepsilon$ is transitive.
\end{example}

\subsection{$(\delta,\varepsilon)$-trace equivalent discrepancy for action pairs}
\label{sec:disc:ind}
Definition~\ref{por:def:appind} implies that from a single state $q$, executing two $\varepsilon$-independent actions in either order, we end up in states that are within $\varepsilon$ distance. The following proposition uses discrepancy to bound the distance between states reached after performing $\varepsilon$-independent actions starting from {\em different\/} initial states $q$ and $q'$.

\begin{proposition}
	\label{prop:disc:inde:pair}
If a pair of actions $a,b\in A$ are $\varepsilon$-independent, and the two states $q,q'\in Q$ satisfy $q.L = q'.L$,
then we have 
\begin{inparaenum}[(i)]
\item $ba(q).L = ab(q').L$, and
\item
$|ba(q).X - ab(q').X| \leq  \beta_b \circ \beta_a(|q.X-q'.X|) + \varepsilon$, where $\beta_a,\beta_b$ are discrepancy functions of $a,b$ respectively.
\end{inparaenum}
\end{proposition}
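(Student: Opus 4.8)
The plan is to bound $|ba(q).X - ab(q').X|$ by a triangle-inequality argument that inserts a carefully chosen intermediate state. The difficulty is that the two states being compared differ in \emph{both} the initial state ($q$ versus $q'$) and the order in which the two actions are applied ($ba$ versus $ab$), so neither Proposition~\ref{por:prop:discrep} (which requires a common trace) nor the closeness clause of Definition~\ref{por:def:appind} (which requires a common initial state) applies directly. The key idea is to route through the intermediate state $ab(q)$ --- the result of applying the sequence $ab$ to $q$ --- which shares its initial state with $ba(q)$ and shares its trace with $ab(q')$.

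For part (ii), I would first split
\[
|ba(q).X - ab(q').X| \le |ba(q).X - ab(q).X| + |ab(q).X - ab(q').X|.
\]
The first summand is controlled by the closeness clause of Definition~\ref{por:def:appind}: since $a$ and $b$ are $\varepsilon$-independent, $|ab(q).X - ba(q).X| \le \varepsilon$ at the single state $q$. The second summand compares the \emph{same} action sequence $ab$ executed from two states $q,q'$ that agree on their discrete part ($q.L=q'.L$), so Proposition~\ref{por:prop:discrep} applies with $\tau = ab$ and yields $|ab(q).X - ab(q').X| \le \beta_b\circ\beta_a(|q.X-q'.X|)$. Adding the two bounds gives exactly $\beta_b\circ\beta_a(|q.X-q'.X|) + \varepsilon$.

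For part (i), the discrete parts, I would chain two facts. The commutativity clause of Definition~\ref{por:def:appind} gives $ba(q).L = ab(q).L$ at the state $q$. Then, starting from $q.L=q'.L$, Assumption~\ref{as:det}(ii) applied to $a$ gives $a(q).L = a(q').L$, and applying it again to $b$ gives $ab(q).L = ab(q').L$. Composing these two equalities yields $ba(q).L = ab(q').L$.

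I expect the only genuinely delicate point to be bookkeeping the composition order: one must verify that routing through $ab(q)$ (rather than, say, $ba(q')$) is what produces the bound $\beta_b\circ\beta_a$ appearing in the statement, since Proposition~\ref{por:prop:discrep} composes the discrepancy functions in the reverse order of the actions in the trace (the last action's discrepancy being outermost). Everything else is a direct invocation of the closeness clause, the commutativity clause, and Assumption~\ref{as:det}(ii); no continuity or limiting argument is needed here, since the $\varepsilon$ term is already furnished by $\varepsilon$-independence.
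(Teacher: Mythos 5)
Your proof is correct, and it follows the same basic pattern as the paper's: a triangle inequality through an intermediate state, with one leg bounded by the closeness clause of Definition~\ref{por:def:appind} and the other by Proposition~\ref{por:prop:discrep}; part (i) is handled identically in both (commutativity at $q$, then Assumption~\ref{as:det}(ii) applied twice). The one genuine difference is the choice of intermediate state in part (ii): you route through $ab(q)$ (closeness applied at $q$, then the common trace $ab$ run from $q$ and $q'$), whereas the paper routes through $ba(q')$ (the common trace $ba$ run from $q$ and $q'$, then closeness applied at $q'$). This difference matters for exactly the bookkeeping point you flag: Proposition~\ref{por:prop:discrep} places the discrepancy function of the \emph{last} action outermost, so the common-trace leg contributes $\beta_b\circ\beta_a$ on your route but $\beta_a\circ\beta_b$ on the paper's, and these compositions need not coincide. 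Consequently your decomposition is the one that literally yields the stated bound $\beta_b\circ\beta_a(|q.X-q'.X|)+\varepsilon$; the paper's proof claims the bound $\beta_b\circ\beta_a$ for its leg $|ba(q).X-ba(q').X|$, which is a harmless but real composition-order slip---its route as written proves the proposition with $\beta_a\circ\beta_b$ in place of $\beta_b\circ\beta_a$. So your version is not only correct but slightly more faithful to the statement as written, and your attention to the order of composition is precisely what distinguishes it.
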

\begin{proof}
Fix a pair of states $q,q'\in Q$ with $q.L = q'.L$.
Since $a\mysim \varepsilon b$,  we have $ba(q).L = ab(q).L$. 
Using the Assumption, we have $ab(q).L = ab(q').L$. 
Using triangular inequality, we have
$|ba(q).X - ab(q').X|\leq |ba(q).X - ba(q').X| + |ba(q').X - ab(q').X|.$
The first  term is bounded by $\beta_{b} \circ \beta_a(|q.X-q'.X|)$ using Proposition \ref{por:prop:discrep} and 
the second is bounded by $\varepsilon$ by Definition \ref{por:def:appind}, and hence, the result follows.
\qed
\end{proof}


\section{Effect of $\varepsilon$-independent traces} 
\label{sec:indep:traces}

In this section, we will develop an analog of Proposition~\ref{prop:disc:inde:pair} 
for $\varepsilon$-independent traces (action sequences) acting on neighboring states.

\subsection{$\varepsilon$-equivalent traces}
First, we define what it means for two finite traces in $A^*$ to be $\varepsilon$-equivalent.
\begin{definition}
	\label{por:def:equiv}
	For any $\varepsilon\geq 0$, we define a relation $R\subseteq A^*\times A^*$ such that 
	$\tau R \tau'$ iff there exists $\sigma,\eta\in A^*$ and $a,b\in A$ such that
	$a\mysim\varepsilon b, \ \tau = \sigma a b \eta, \mbox{ and } \tau' = \sigma ba \eta.$
	We define an equivalence relation $\myequiv\varepsilon \  \subseteq A^* \times A^*$ called {\em $\varepsilon$-equivalence}, as the reflexive and transitive closure of $R$.
\end{definition}
That is, two traces $\tau,\tau'\in A^*$ are $\varepsilon$-equivalent if we can construct $\tau'$ from $\tau$ by performing a sequence of swaps of consecutive $\varepsilon$-independent actions.

In the following proposition,
states that the last states of two potential executions starting from  the same initial discrete state (location) and resulting from equivalent traces have identical locations.
\begin{proposition}
	\label{por:prop:perm_loc}
	Fix potential executions $\xi = \exe{q_0,\tau}$ and $\xi' = \exe{q_0',\tau'}$.
	If $q_0.L = q_0'.L$  and $\tau\myequiv \varepsilon \tau'$, then  $\xi.\lstate.L = \xi'.\lstate.L$.
\end{proposition}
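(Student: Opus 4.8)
The plan is to reduce the claim to a purely discrete statement about how the finite-valued (location) part propagates, observing that only the \emph{commutativity} clause of $\varepsilon$-independence---not the closeness clause---is needed. By Assumption~\ref{as:det}(ii), the location reached after applying any single action depends only on the location of the state it is applied to, not on the continuous part. Iterating this (equivalently, invoking the location identity of Proposition~\ref{por:prop:discrep}), for any trace $\rho$ the assignment $\ell \mapsto \xi_{q,\rho}.\lstate.L$, where $q.L = \ell$, is a well-defined function $f_\rho$ of the starting location $\ell$ alone, and these maps compose as $f_{\rho_1\rho_2} = f_{\rho_2}\circ f_{\rho_1}$. Since $q_0.L = q_0'.L$, we have $\xi.\lstate.L = f_\tau(q_0.L)$ and $\xi'.\lstate.L = f_{\tau'}(q_0.L)$, so it suffices to prove $f_\tau = f_{\tau'}$ whenever $\tau \myequiv\varepsilon \tau'$.

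First I would handle the generating relation $R$. Suppose $\tau R \tau'$, so that $\tau = \sigma ab\eta$ and $\tau' = \sigma ba\eta$ with $a\mysim\varepsilon b$. Then $f_\tau = f_\eta\circ f_{ab}\circ f_\sigma$ and $f_{\tau'} = f_\eta\circ f_{ba}\circ f_\sigma$, so it is enough to show $f_{ab} = f_{ba}$. For any location $\ell$, choose a state $q$ with $q.L = \ell$; the commutativity clause of Definition~\ref{por:def:appind} gives $ab(q).L = ba(q).L$, i.e. $f_{ab}(\ell) = f_{ba}(\ell)$. Hence $f_\tau = f_{\tau'}$ for a single swap. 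Note that the closeness clause plays no role here: the locations coincide exactly, even though the continuous parts of $ab(q)$ and $ba(q)$ may differ by up to $\varepsilon$.

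Finally I would lift this to $\myequiv\varepsilon$, the reflexive transitive closure of $R$. Unfolding the closure, $\tau\myequiv\varepsilon\tau'$ produces a finite chain $\tau = \tau_0\, R\, \tau_1\, R\, \cdots\, R\, \tau_k = \tau'$, and I induct on $k$: the base case $k=0$ is immediate, and the inductive step composes the single-swap equality $f_{\tau_{k-1}} = f_{\tau_k}$ with the hypothesis $f_{\tau_0} = f_{\tau_{k-1}}$ to obtain $f_\tau = f_{\tau'}$, whence $\xi.\lstate.L = \xi'.\lstate.L$. The main obstacle is not any single computation but the bookkeeping that cleanly separates the discrete dynamics from the continuous dynamics: I must be careful to justify that $f_\rho$ is genuinely independent of the chosen representative state, so that the two potential executions---which differ in their continuous coordinates, and for which one of $ab$, $ba$ may even fail to be a valid execution---nonetheless track identical locations throughout. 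Once Assumption~\ref{as:det}(ii) secures this well-definedness, the remainder is routine.
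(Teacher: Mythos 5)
Your proof is correct and follows essentially the same route as the paper's: unfold $\myequiv{\varepsilon}$ into a finite chain of single swaps, apply the commutativity clause of Definition~\ref{por:def:appind} at each swap, and use Assumption~\ref{as:det}(ii) to propagate equal locations through the common prefix and suffix. Your location-map formalism $f_\rho$ is merely tidier bookkeeping; its one added benefit is that it makes explicit the step from initial states $q_0 \neq q_0'$ with equal locations, which the paper's proof handles only implicitly via "the Assumption."
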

\begin{proof}
If $\tau = \tau'$, then the proposition follows from the Assumption.
Suppose $\tau \neq \tau'$, from Definition~\ref{por:def:equiv}, there exists a sequence of action sequences $\tau_0,\tau_1,\dots,\tau_k$ to join $\tau$ and $\tau'$ by swapping neighboring approximately independent actions.
Precisely the sequence $\{\tau_i\}_{i=0}^k$ satisfies:
\begin{inparaenum}[(i)]
\item $\tau_0 = \tau$ and $\tau_k = \tau'$, and 
\item for each pair $\tau_i$ and $\tau_{i+1}$, there exists $\sigma,\eta\in A^*$ and $a,b\in A$ such that
$a\mysim\varepsilon b$, $\tau_i = \sigma a b \eta$, and $\tau_{i+1} = \sigma ba \eta$.
\end{inparaenum}
From Definition~\ref{por:def:appind}, swapping approximately independent actions preserves the value of the discrete part of the final state. 
Hence for any $i\in[k]$, $\exe{q_0, \tau_i}.\lstate.L = \exe{q_0,\tau_{i+1}}.\lstate.L$. 
Therefore, $\xi.\lstate.L = \xi'.\lstate.L$.
\qed
\end{proof}

Next, we relate pairs of potential executions that result from $\varepsilon$-equivalent traces and  initial states that are $\delta$-close.
\begin{definition}
	\label{por:def:close}
	Given $\delta,\varepsilon \geq 0$, a pair of initial states $q_0, q_0'$, and 
	a pair traces $\tau, \tau' \in A^*$, 
	the corresponding potential executions $\xi = \exe{q_0,\tau}$ and $\xi'=\exe{q_0',\tau'}$ are {\em ($\delta,\varepsilon$)-related\/}, denoted by $\xi \myequivexe{\delta,\varepsilon} \xi'$, if 
	$q_0.L = q_0'.L$, $|q_0.X-q_0'.X| \leq \delta$, and $\tau \myequiv{\varepsilon} \tau'$.
\end{definition}
	\begin{example}
		\label{por:ex:linear3}
		In Example~\ref{por:ex:linear2}, we show that $a_0\mysim\varepsilon a_1$ and $a_0\mysim\varepsilon a_2$ with $\varepsilon = 0.1$.
		Consider the executions
		$
		\xi = q_0, a_0, q_1,a_1,q_2,a_2,q_3,a_\bot,q_4$
		and
		$\xi' = q_0', a_1,q_1',a_2,q_2',a_0,q_3',a_\bot,q_4'.
		$
		with traces $\trace\xi = a_0a_1a_2a_\bot$ and $\trace{\xi'} = a_1a_2a_0a_\bot$.
		For  $\varepsilon = 0.1$, we have $a_0a_1a_2a_\bot \myequiv \varepsilon  a_1a_0a_2a_\bot$ and $a_1a_0a_2a_\bot \myequiv\varepsilon  a_1a_2a_0a_\bot$.
		Since the equivalence relation $\myequiv\varepsilon$ is transitive, we have $\trace\xi \myequiv\varepsilon \trace{\xi'}$.
		Suppose $q_0\in\ball{\delta}{q_0'}$, then $\xi$ and $\xi'$ are ($\delta,\varepsilon$)-related executions with $\varepsilon=0.1$.
	\end{example}

It follows from Proposition~\ref{por:prop:perm_loc} that the discrete state (locations) reached by any pair of ($\delta,\varepsilon$)-related potential executions are the same.
 At the end of this section, in Lemma~\ref{por:lem:perm}, we will bound the distance between the continuous state reached by  ($\delta,\varepsilon$)-related potential executions.
\chuchu{We define in the following this bound as what we call {\em trace equivalent discrepancy factor ($\bloatf$)}, which is a constant number that works for all possible values of the variables starting from the initial set}.
Looking ahead, by bloating a single potential execution by the corresponding $\bloatf$, we can over-approximate the reachset of all related potential executions. This will be the basis for the reachability analysis in Section~\ref{por:sec:general}.

\begin{definition}
	\label{por:def:represent}
	For any potential execution $\xi$ and constants $\delta,\varepsilon \geq 0$,
	a {\em $(\delta, \varepsilon)$-trace equivalent \chuchu{discrepancy factor} ($\bloatf$)} is a nonnegative constant $r \geq 0$,
	such that for any $(\delta, \varepsilon)$-related potential finite execution $\xi'$,  
	\[
	|\xi'.\lstate.X -  \xi.\lstate.X| \leq r.
	\]
\end{definition}
That is, if $r$ is a $(\delta, \varepsilon)$-$\bloatf$, then the $r$-neighborhood of $\xi$'s last state $\B_r(\xi.\lstate)$ contains the last states of all other ($\delta,\varepsilon$)-related potential executions.

\subsection{$(0,\varepsilon)$-trace equivalent discrepancy for traces (on the same initial states)}

In this section, we will develop an inductive method for computing $(\delta,\varepsilon)$-$\bloatf$. 
We begin by bounding the distance between potential executions that differ only in the position of a single action.
\begin{lemma}
\label{por:lem:insert}
Consider any $\varepsilon \geq 0$, an initial state $q_0\in Q$, an action $a\in A$ and a trace  $\tau \in A^*$ with $len(\tau) \geq 1$. 
If $\tau \mysim\varepsilon a$, then
the potential executions $\xi = \exe{q_0, \tau a}$ and $\xi' = \exe{q_0, a\tau}$ satisfy
\begin{enumerate}[(i)]
\item $\xi'.\lstate.L = \xi.\lstate.L$ and
\item $|\xi'.\lstate.X - \xi.\lstate.X| \leq \gamma_{n-1}(\varepsilon)$,
where $\gamma_n$ corresponds to the set of discrepancy functions $\{\beta_c\}_{c\in\tau}$ for the actions in $\tau$.
\end{enumerate}
\end{lemma}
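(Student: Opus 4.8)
The plan is to establish both parts by induction on $n = len(\tau)$, sliding the action $a$ across $\tau$ one position at a time and controlling how each elementary swap perturbs the final continuous state. Part (i) needs no induction: the traces $\tau a$ and $a\tau$ differ only by a chain of transpositions of consecutive $\varepsilon$-independent pairs $\tau(i),a$ (these are $\varepsilon$-independent because $\tau\mysim\varepsilon a$), so $\tau a \myequiv\varepsilon a\tau$, and since $\xi,\xi'$ share the initial state $q_0$, Proposition~\ref{por:prop:perm_loc} immediately yields $\xi'.\lstate.L = \xi.\lstate.L$. All the work is therefore in the continuous bound (ii).

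For the base case $n=1$, write $\tau = c$ with $c\mysim\varepsilon a$; then $\xi = \exe{q_0,ca}$ and $\xi' = \exe{q_0,ac}$, and the closeness clause of Definition~\ref{por:def:appind} gives $|\xi'.\lstate.X - \xi.\lstate.X|\le \varepsilon = \gamma_0(\varepsilon)$. For the inductive step, decompose $\tau = c\tau'$ with $c = \tau(0)$ and $len(\tau') = n-1$ (so both $c\mysim\varepsilon a$ and $\tau'\mysim\varepsilon a$), and introduce the intermediate trace $c\,a\,\tau'$. Comparing $\exe{q_0,c\tau' a}$ with $\exe{q_0,ac\tau'}$ then splits, by the triangle inequality, into comparing each against $\exe{q_0,ca\tau'}$.

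In the first half, $\exe{q_0,c\tau' a}$ and $\exe{q_0,ca\tau'}$ both begin by applying $c$, reaching the common state $q_1 = c(q_0)$; what remains is to compare $\exe{q_1,\tau' a}$ and $\exe{q_1,a\tau'}$, which the induction hypothesis (applicable since $\tau'\mysim\varepsilon a$ and $len(\tau')=n-1$) bounds by $\gamma_{n-2}(\varepsilon)$ --- here I use that the $\gamma$ built from the smaller set $\{\beta_c\}_{c\in\tau'}$ is dominated by the $\gamma$ built from $\{\beta_c\}_{c\in\tau}$, since $\beta_{max}$ only grows when the action set grows. In the second half, $\exe{q_0,ca\tau'}$ and $\exe{q_0,ac\tau'}$ apply the $\varepsilon$-independent pair $c,a$ to $q_0$ in opposite orders; by Definition~\ref{por:def:appind} the two states reached after these first two actions have equal discrete parts and continuous parts within $\varepsilon$, and applying the common suffix $\tau'$ to both and invoking Proposition~\ref{por:prop:discrep} bounds the final gap by $\beta_{a_{n-1}}\circ\cdots\circ\beta_{a_1}(\varepsilon) \le \beta_{max}^{n-1}(\varepsilon)$. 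Summing the two halves gives $\gamma_{n-2}(\varepsilon) + \beta_{max}^{n-1}(\varepsilon) = \gamma_{n-1}(\varepsilon)$, closing the induction.

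The main obstacle is the error-propagation estimate in the second half: a single $\varepsilon$-sized discrepancy, created at the swap, must be pushed through the discrepancy functions of the trailing actions to the final state, and the resulting composite $\beta_{a_{n-1}}\circ\cdots\circ\beta_{a_1}(\varepsilon)$ must be shown to be at most $\beta_{max}^{n-1}(\varepsilon)$. This last step relies on the discrepancy functions being non-decreasing (so that replacing each $\beta_{a_j}$ by the pointwise-larger $\beta_{max}$ survives composition); I would invoke this as the standing convention on discrepancy functions, or equivalently take $\beta_{max}$ to be the non-decreasing pointwise maximum over the finite action set. The remaining care-point is purely bookkeeping: matching the action set underlying $\gamma_{n-2}$ for the sub-trace $\tau'$ to that for $\tau$, which is fine because $\tau'$'s actions form a sub-multiset of $\tau$'s.
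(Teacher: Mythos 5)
Your proof is correct and follows essentially the same route as the paper's: the same intermediate potential execution $\exe{q_0, b_0 a b_1\cdots b_{n-1}}$ (your $\exe{q_0,ca\tau'}$), the same use of the induction hypothesis after the shared prefix $c=b_0$, the same swap-plus-Proposition~\ref{por:prop:discrep} propagation for the other half, and the same assembly $\gamma_{n-2}(\varepsilon)+\beta_{max}^{n-1}(\varepsilon)=\gamma_{n-1}(\varepsilon)$. Your two flagged care-points (monotonicity of discrepancy functions under composition, and the fact that $\gamma$ built from $\{\beta_c\}_{c\in\tau'}$ is dominated by $\gamma$ built from $\{\beta_c\}_{c\in\tau}$) are genuine implicit assumptions that the paper's own proof also relies on without stating, so making them explicit only strengthens the argument.
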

\begin{proof}
Part (i) directly follows from Proposition~\ref{por:prop:perm_loc}.
We will prove part (ii) by induction on the length of  $\tau$. \\
{\bf Base}: For any trace $\tau$ of length 1, $\xi$ and $\xi'$ are of the form $\xi= q_0, b_0, q_1, a, q_2$ and $\xi'= q_0, a, q_1', b_0, q_2'$.
Since $a\mysim\varepsilon b_0$ and the two executions start from the same state, it follows from Definition~\ref{por:def:appind} that $|q_2'.X - q_2.X| \leq \varepsilon$. 
%
Recall from the preliminary that $\gamma_0(\varepsilon) = \beta^0(\varepsilon) = \varepsilon$. Hence $|q_2'.X - q_2.X| \leq \gamma_0(\varepsilon)$ holds for trace $\tau$ with $len(\tau) = 1$.\\
{\bf Induction}: Suppose the lemma holds for any $\tau$ with length at most $n-1$. 
Fixed any $\tau= b_0b_1\dots b_{n-1}$ of length $n$, we will show the lemma holds for $\tau$.

Let the potential executions $\xi=\exe{q_0, \tau a}$ and $\xi'=\exe{q_0, a\tau}$ be the form
\[
\begin{array}{c}
\xi=q_0,b_0,q_1, b_1, ..., b_{n-1}, q_n, a, q_{n + 1}, \\
\xi'= q_0, a, q_1', b_0, q_2', b_1,  ..., b_{n-1}, q_{n+1}'.
\end{array}
\]

\begin{figure}[H]
\centering
\includegraphics[width=.6\textwidth]{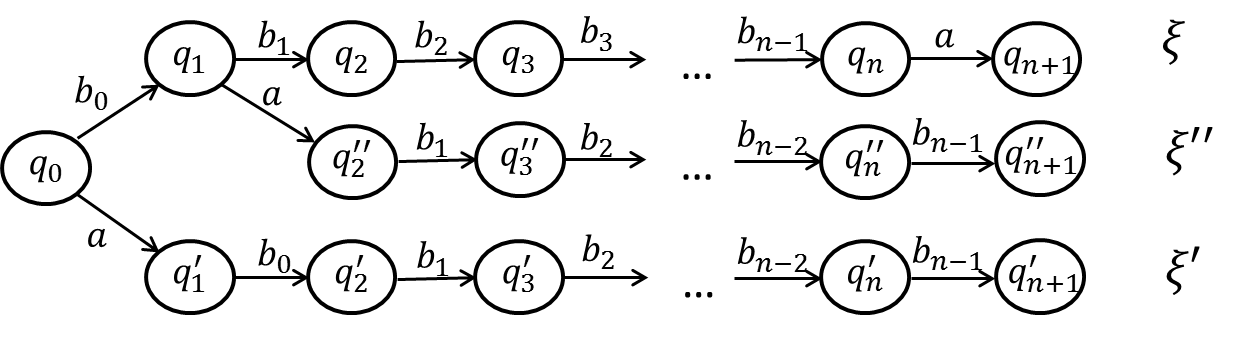}
\caption{\scriptsize Potential executions $\xi, \xi'$, and $\xi''$.}
\label{fig:insert}
\end{figure}

It suffices to prove that  $|\xi.\lstate.X-\xi'.\lstate.X| = |q_{n + 1}.X-q_{n + 1}'.X| \leq \gamma_{n-1}(\varepsilon)$.
We first construct a potential execution $\xi'' = \exe{q_0,b_0ab_1\dots b_{n-1}}$ by swapping the first two actions of $\xi'$. Then, $\xi''$ is of the form:
$\xi''= q_0, b_0, q_1, a, q_2'', b_1,  ..., b_{n-1}, q_{n+1}''.$
The potential executions $\xi,\xi'$ and $\xi''$ are shown in Figure~\ref{fig:insert}.
We first compare the potential executions $\xi$ and $\xi''$.
Notice that, $\xi$ and $\xi''$ share a common prefix $q_0, b_0, q_1$. 
Starting from $q_1$, the action sequence of $\xi''$ is derived from $\trace{\xi}$ 
by inserting action $a$ in front of the action sequence $\tau' =b_1b_2\dots b_{n-1}$. 

Since $\tau'\mysim{\varepsilon} a$, applying the induction hypothesis on the length $n-1$ action sequence $\tau'$, we get
$|q_{n+1}.X - q_{n+1}''.X|\leq \gamma_{n-2}(\varepsilon).$
%
Then, we compare the potential executions $\xi'$ and $\xi''$.
Since $b_0\mysim\varepsilon a$, by applying the property of Definition~\ref{por:def:appind} to 
the first two actions of $\xi'$ and $\xi''$, we have
$|q_2'.X-q_2''.X| \leq \varepsilon$.
We note that $\xi'$ and $\xi''$ have the same suffix of action sequence from $q_2'$ and $q_2''$.
Using Proposition~\ref{por:prop:discrep} from states $q_2'$ and $q_2''$, we have
\begin{equation}
\label{por:eq:lem13}
|q_{n+1}'.X-q_{n+1}''.X| \leq \beta_{b_1}\beta_{b_2}\dots\beta_{b_{n-1}}(|q_2'.X-q_2''.X|) \leq \beta^{n-1}(\varepsilon).
\end{equation}
Combining 
the bound on $|q_2'.X-q_2''.X|$
 and~\eqref{por:eq:lem13} with triangular inequality, we have
$
|q_{n+1}.X - q_{n+1}'.X| \leq |q_{n+1}.X - q_{n+1}''.X| + |q_{n+1}'.X - q_{n+1}''.X| \leq \gamma_{n-2}(\varepsilon) + \beta^{n-1}(\varepsilon)
=\gamma_{n-1}(\varepsilon).
$
\qed
\end{proof}

\subsection{$(\delta,\varepsilon)$-trace equivalent discrepancy for traces}

Lemma~\ref{por:lem:insert} gives a way to compute $(0,\varepsilon)$-$\bloatf$. Now, we generalize this  to compute $(\delta,\varepsilon)$-$\bloatf$, for ($\delta,\varepsilon$)-related  potential executions, with any $\delta \geq 0$. 
%
\chuchu{The following lemma gives an inductive way of constructing $\bloatf$ as an action $a$ is appended to a trace $\tau$}.

\begin{lemma}
\label{por:lem:perm}
For any  potential execution $\xi = \xi_{q_0,\tau}$
and constants $\delta,\varepsilon \geq 0$, 
if $r$ is a $(\delta, \varepsilon)$-$\bloatf$ for $\xi$, and the action $a\in A$ satisfies $\tau\mysim\varepsilon a$, then 
$
r' =\beta_a(r) + \gamma_{len(\tau)-1}(\varepsilon)
$
is a $(\delta, \varepsilon)$-$\bloatf$ for $\xi_{q_0,\tau a}$.
\end{lemma}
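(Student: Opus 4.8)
The plan is to verify Definition~\ref{por:def:represent} directly for $\xi_{q_0,\tau a}$. I fix an arbitrary potential execution $\zeta=\xi_{q_0',\sigma}$ that is $(\delta,\varepsilon)$-related to $\xi_{q_0,\tau a}$, so that $q_0'.L=q_0.L$, $|q_0'.X-q_0.X|\le\delta$, and $\sigma\myequiv\varepsilon\tau a$, and I must bound $|\zeta.\lstate.X-\xi_{q_0,\tau a}.\lstate.X|$ by $r'$. The first observation is that the hypothesis $\tau\mysim\varepsilon a$ forces $a\notin\tau$, since $\varepsilon$-independence is defined only between distinct actions; hence $\tau a$, and therefore $\sigma$, contains exactly one occurrence of $a$. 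The argument then hinges on a single intermediate execution $\xi_{q_0',\tau' a}$, where $\tau'$ is obtained from $\sigma$ by sliding its unique $a$ to the end, together with the triangle-inequality split of the target distance through this intermediate's last state.

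For the first piece $|\xi_{q_0',\sigma}.\lstate.X-\xi_{q_0',\tau' a}.\lstate.X|$, I write $\sigma=\rho\,a\,\mu$ with $\rho,\mu$ free of $a$; the two executions share the prefix $\rho$, hence the state $q_1$ reached after it, and from $q_1$ they run $a\mu$ and $\mu a$ respectively. Because every symbol of $\mu$ occurs in $\tau$ and $\tau\mysim\varepsilon a$, we have $\mu\mysim\varepsilon a$, so Lemma~\ref{por:lem:insert} applies from $q_1$ and bounds this piece by $\gamma_{len(\mu)-1}(\varepsilon)$. Since $len(\mu)\le len(\tau)$ and $\{\gamma_n\}$ is non-decreasing in its index (Proposition~\ref{por:prop:gamma}(i)) --- and the actions of $\mu$ are among those of $\tau$, so the corresponding $\beta_{max}$ is dominated --- this is at most $\gamma_{len(\tau)-1}(\varepsilon)$.

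For the second piece both executions end in the action $a$, so I compare the pre-$a$ states $\xi_{q_0',\tau'}.\lstate$ and $\xi_{q_0,\tau}.\lstate$. The crux is that $\tau'\myequiv\varepsilon\tau$ (treated below); granting it, $\xi_{q_0',\tau'}$ is $(\delta,\varepsilon)$-related to $\xi=\xi_{q_0,\tau}$ by Definition~\ref{por:def:close}, so the assumption that $r$ is a $(\delta,\varepsilon)$-$\bloatf$ for $\xi$ gives $|\xi_{q_0',\tau'}.\lstate.X-\xi.\lstate.X|\le r$, while Proposition~\ref{por:prop:perm_loc} gives that these two states carry the same location. Appending $a$ and invoking the discrepancy bound of Definition~\ref{por:def:discrep}(i) (with $\beta_a$ non-decreasing, as is implicit throughout in the use of $\beta_{max}$ and its compositions) propagates this to $|\xi_{q_0',\tau' a}.\lstate.X-\xi_{q_0,\tau a}.\lstate.X|\le\beta_a(r)$. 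Summing the two pieces yields $\beta_a(r)+\gamma_{len(\tau)-1}(\varepsilon)=r'$, and since $\zeta$ was arbitrary, $r'$ is a $(\delta,\varepsilon)$-$\bloatf$ for $\xi_{q_0,\tau a}$.

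The step I expect to require the most care is establishing $\tau'\myequiv\varepsilon\tau$. Sliding the unique $a$ of $\sigma$ to the end uses only swaps of $a$ with $\varepsilon$-independent neighbours, so $\sigma\myequiv\varepsilon\tau' a$; combined with the given $\sigma\myequiv\varepsilon\tau a$ this yields $\tau' a\myequiv\varepsilon\tau a$. I would then cancel the trailing $a$ by a projection argument: in a swap sequence witnessing $\tau a\myequiv\varepsilon\tau' a$, the unique occurrence of $a$ can be tracked, and each swap either avoids it (acting as an $\varepsilon$-independent swap on the remaining symbols) or moves it past an $\varepsilon$-independent neighbour. Deleting this occurrence turns the sequence into a derivation built solely from $\varepsilon$-independent swaps among the symbols of $\tau$, transforming $\tau$ into $\tau'$, whence $\tau\myequiv\varepsilon\tau'$. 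Uniqueness of $a$, guaranteed by $a\notin\tau$, is exactly what makes this tracking unambiguous, and it also legitimizes the single application of Lemma~\ref{por:lem:insert} above, since no $a$-with-$a$ swap --- which $\varepsilon$-independence would not sanction --- is ever required.
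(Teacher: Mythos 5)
Your proof is correct and takes essentially the same route as the paper's: it splits the $\varepsilon$-equivalent trace at its unique occurrence of $a$, introduces the intermediate potential execution $\xi_{q_0',\rho\mu a}$, bounds one leg by Lemma~\ref{por:lem:insert} and the other by the $(\delta,\varepsilon)$-$\bloatf$ hypothesis followed by an application of $\beta_a$, and concludes with the triangle inequality. The only difference is that you explicitly prove the cancellation property $\rho\mu\myequiv{\varepsilon}\tau$ via the swap-projection argument, whereas the paper simply asserts the corresponding claim (``$\trace{\xi'}$ is in a form $\phi a \eta$ with some $\phi\eta\myequiv{\varepsilon}\tau$''), so your version fills in a step the paper leaves implicit.
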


\begin{proof}
Fix any $\xi'$ that is $(\delta,\varepsilon)$-related to $\xi$ and with initial state $q_0'\in\ball{\delta}{q_0}$. It follows from Proposition~\ref{por:prop:perm_loc} that $\xi'.\lstate.L = \xi.\lstate.L$.
It suffices to prove that $|\xi'.\lstate.X - \xi.\lstate.X|\leq r'$.

Since $\trace{\xi'} \myequiv \varepsilon \tau a$, $\trace{\xi'}$ is in a form $\phi a \eta$ with some $\phi\eta\myequiv\varepsilon \tau$.
We construct a potential execution $\xi'' = \exe{q_0', \phi\eta a}$. The three potential executions are illustrated in Figure~\ref{fig:perm} below.
\begin{figure}[h!]
\centering
\includegraphics[width=0.4\textwidth]{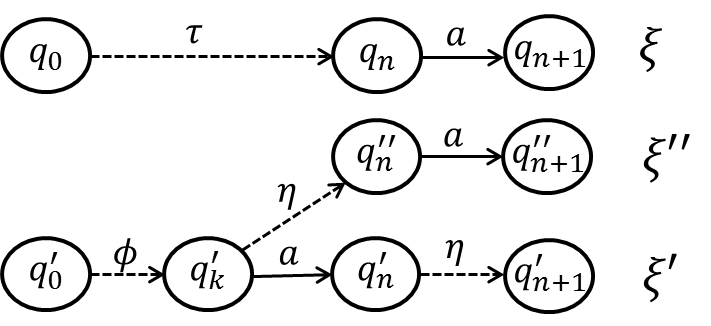}
\caption{Execution $\xi$, its $\varepsilon$-equivalent execution $\xi'$, and execution $\xi''$ that is constructed by swapping action $a$ to the back in $\xi'$.}
\label{fig:perm}
\end{figure}

We note that $r$ is a $\bloatf$ for the the prefix ($q_0,\tau,q_n$) of $\xi$ and $\delta,\varepsilon$.
Since $\phi\eta \myequiv \varepsilon \tau$ and $q_0'\in\ball{\delta}{q_0}$, it follows from Definition~\ref{por:def:represent} that $|q_n.X-q_n''.X|\leq r$. Hence
\begin{equation}
\label{por:eq:lem21}
|\xi.\lstate.X - \xi''.\lstate.X|\leq \beta_a(|q_n.X-q_n''.X|) \leq \beta_a(r).
\end{equation}
On the other hand, we observe that the traces of $\xi'$ and $\xi''$ differ only in the position of action $a$.
Application of Lemma~\ref{por:lem:insert} on $\xi'$ and $\xi''$ yields  
\begin{equation}
\label{por:eq:lem22}
|\xi'.\lstate.X - \xi''.\lstate.X|\leq \gamma_{len(\eta)-1}(\varepsilon)\leq \gamma_{len(\tau)-1}(\varepsilon).
\end{equation}
Combining~\eqref{por:eq:lem21} and~\eqref{por:eq:lem22} with triangular inequality, we have
\[
|\xi.\lstate.X - \xi'.\lstate.X| \leq \beta_a(r) + \gamma_{len(\tau)-1}(\varepsilon).
\]
\qed
\end{proof}

\section{Reachability with approximate partial order reduction}
\label{por:sec:general}

We will present our main algorithm (Algorithm~\ref{alg:main}) for reachability analysis with approximate partial order reduction in this section. The core idea is to over-approximate $\reach{}{B_\delta(q_0),T}$ by (a) computing the actual execution $\xi_{q_0,\tau}$ and (b) expanding  this $\xi_{q_0,\tau}$ by a $(\delta, \varepsilon)$-$\bloatf$ to cover  all the states reachable from any other $(\delta_0,\varepsilon)$-related potential execution. 
Combining such over-approximations from a cover of $\Theta$, we get over-approximations of $\reach{}{\Theta,T}$, and therefore, Algorithm~\ref{alg:main} can be used to soundly check for bounded safety or invariance. 
The over-approximations can be made arbitrarily precise by  shrinking $\delta_0$ and $\varepsilon$ (Theorem~\ref{them:precision}).
Of course, at $\varepsilon = 0$ only traces that are exactly equivalent to $\tau$ will be covered, and nothing else. 
%
Algorithm~\ref{alg:main} avoids computing  $(\delta_0,\varepsilon)$-related executions, and therefore,  gains (possibly exponential) speedup.
 
The key subroutine in Algorithm~\ref{alg:main} is $\Post$ which computes the $\bloatf$ by adding one more action to the traces.
It turns out that, the $\bloatf$ is independent of $q_0$, but only depends on the sequence of actions in $\tau$. 
$\Post$ is used to compute  $\delta_{t}$ from $\delta_{t-1}$, such that, $\delta_t$ is the  $\bloatf$ for the length $t$ prefix of $\xi$. 
Let action $a$ be the $t^{th}$ action and $\xi = \xi_{q_0, \tau a}$. If  $a$ is $\varepsilon$-independent to $\tau$, then the $\bloatf$ $\delta_t$ can be computed from $\delta_{t-1}$ just using Lemma~\ref{por:lem:perm}.
%
%
For the case where  $a$ is not  $\varepsilon$-independent to the whole sequence $\tau$, 
we would still want to compute a set of executions that $\xi_{q_0, \tau a}$ can cover. 
We observe that, with appropriate computation of $\bloatf$,
 $\xi_{q_0, \tau a}$ can cover all executions of the form $\xi_{q_0, \phi a \eta}$, 
 where  $\phi a \eta$ is $\varepsilon$-equivalent to $\tau a$ and $a \notin \eta$.
%
In what follows, we introduce this notion of {\em earliest equivalent position of $a$ in $\tau$\/} (Definition~\ref{def:anchor}), which is the basis for
the $\Post$ subroutine, which in turn is then used in the main reachability Algorithm~\ref{alg:main}.

\subsection{Earliest equivalent position of an action in a trace}
\label{por:sec:anchor}


For any trace $\tau\in A^*$ and  action $a \in \tau$,
we define $\mathit{lastPos}(\tau,a)$ as the largest index $k$ such that $\tau(k) = a$.
The earliest equivalent position, $\eep(\tau,a,\varepsilon)$ is the minimum of $\mathit{lastPos}(\tau',a)$ in any $\tau'$ that is $\varepsilon$-equivalent to $\tau a$.
\begin{definition}
\label{def:anchor}
For any trace $\tau\in A^*$, $a\in A$, and $\varepsilon >0$, 
the {\em earliest equivalent position} of $a$ on $\tau$ is 
$
\eep(\tau,a,\varepsilon) \deq \min_{\tau' \myequiv \varepsilon \tau a} \mathit{lastPos}(\tau',a).
$
\end{definition}
%
For any trace $\tau a$, its $\varepsilon$-equivalent traces can be derived by swapping consecutive $\varepsilon$-independent action pairs.
Hence, the eep of $a$ is the leftmost position it can be swapped to, starting from the end.
Any equivalent trace of $\tau a$ is of the form $\phi a \eta$ where $\phi$ and $\eta$ are the prefix and suffix of the last occurrence of action $a$.
Hence,  equivalently:
$
\eep(\tau,a,\epsilon) = \min_{\phi a \eta \myequiv \varepsilon \tau a,\ a\notin \eta} len(\phi).
$
%
%
\chuchu{In Appendix~\ref{app:eep}} we give a simple $O(len(\tau)^2)$ algorithm  for computing $\eep()$. If the $\varepsilon$-independence relation is symmetric, then it $\eep$ can be computed in $O(len(\tau))$ time.

\begin{example}
In Example~\ref{por:ex:linear2}, we showed that $a_0\mysim\varepsilon a_1$ and $a_0\mysim\varepsilon a_2$ with $\varepsilon = 0.1$;  $a_\bot$ is not $\varepsilon$-independent to any actions.
What is $\eep(  a_\bot a_0 a_1,a_2, \varepsilon)$?
We can swap $a_2$ ahead following the sequence $\tau a_2 = a_\bot a_0 a_1 a_2\myequiv \varepsilon a_\bot a_1 a_0 a_2 \myequiv \varepsilon a_\bot a_1 a_2 a_0$.
As $a_\bot$ and $a_1$  are not independent of $a_2$, it cannot occur earlier.
$\eep(  a_\bot a_0 a_1,a_2, \varepsilon) = 2$.
\end{example}

\subsection{Reachability  using $(\delta,\varepsilon)$-trace equivalent discrepancy}


$\Post$ (Algorithm~\ref{alg:post}) takes inputs of trace $\tau$, a new action to be added $a$, a parameter $r \geq 0$ such that $r$ is a $(\delta_0,\varepsilon)$-$\bloatf$ for the potential execution $\xi_{q_0,\tau}$ for some initial state $q_0$,  initial set radius $\delta_0$, approximation parameter $\varepsilon \geq 0$, and a set of discrepancy functions $\{\beta_a\}_{a\in A}$. It returns a $(\delta_0, \varepsilon)$-$\bloatf$ $r'$ for the potential execution $\xi_{q_0, \tau a}$.

\begin{algorithm}[h!]
\caption{$\Post(\tau, a, r, \varepsilon, \{\beta_a\}_{a\in A})$}
\label{alg:post}
\begin{algorithmic}[1]
\State $\beta \gets \max_{b\in \tau a}\{\beta_b\}$; $k\gets\mathit{\eep(\tau, a, \epsilon)}$; $t \gets len(\tau)$;
\IfThenElse {$k = t$} 
{$r' \gets \beta_{a}(r)$}
{$r' \gets \beta_a(r) +  \gamma_{t-k-1}(\varepsilon)$} \label{ln:case1} 
\State \Return $r'$;
\end{algorithmic}
\end{algorithm}

\begin{lemma}
\label{por:lem:sound}
For some initial state $q_0$ and initial set size $\delta_0$,
if $r$ is a $(\delta_0, \varepsilon)$-$\bloatf$ for $\xi_{q_0, \tau}$  then
value returned by $\Post()$ is a $(\delta_0, \varepsilon)$-$\bloatf$ for $\xi_{q_0,\tau a}$.
\end{lemma}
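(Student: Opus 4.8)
The plan is to fix an arbitrary potential execution $\xi'$ that is $(\delta_0,\varepsilon)$-related to $\xi = \exe{q_0,\tau a}$ and show $|\xi'.\lstate.X - \xi.\lstate.X| \le r'$, where $r'$ is the value returned by $\Post$. The location equality $\xi'.\lstate.L = \xi.\lstate.L$ is immediate from Proposition~\ref{por:prop:perm_loc}, so only the continuous part needs work. Writing $q_0' = \xi'.\fstate$ (so $q_0' \in \ball{\delta_0}{q_0}$), I would split $\trace{\xi'}$ at the last occurrence of $a$ as $\trace{\xi'} = \phi a \eta$ with $a\notin\eta$. By the equivalent formulation of the earliest equivalent position in Definition~\ref{def:anchor}, $len(\phi)\ge k$ where $k=\eep(\tau,a,\varepsilon)$; since $len(\trace{\xi'}) = len(\tau a) = t+1$ with $t=len(\tau)$, this gives $len(\eta) = t - len(\phi) \le t-k$. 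The proof then follows the case split of the algorithm on whether $k=t$.

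The combinatorial heart, and the step I expect to be the main obstacle, is to extract from $\phi a \eta \myequiv\varepsilon \tau a$ the two facts needed to reduce to the already-proved lemmas: (a) $\eta \mysim\varepsilon a$, i.e. $a$ is $\varepsilon$-independent to every action of $\eta$, and (b) $\phi\eta \myequiv\varepsilon \tau$. Both follow by tracking the \emph{appended} occurrence of $a$ through a swap-derivation witnessing $\tau a \myequiv\varepsilon \phi a \eta$. In $\tau a$ this occurrence sits at the end, so every other action lies to its left; in $\phi a\eta$ the actions of $\eta$ lie to its right. Because each elementary swap exchanges only adjacent actions, any action migrating from the left of this $a$ to its right must at some step be swapped directly with it, which by Definition~\ref{por:def:appind} forces $\varepsilon$-independence; this yields (a). For (b) I would project the derivation by deleting the tracked $a$ from every intermediate word: swaps not involving it stay legal adjacent swaps, while swaps involving it become no-ops, producing a derivation from $\tau$ to $\phi\eta$ and hence $\phi\eta \myequiv\varepsilon \tau$. (This is cancellativity of the underlying trace congruence, but the projection keeps it self-contained.)

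With (a) and (b) in hand, the estimate mirrors the proof of Lemma~\ref{por:lem:perm}. Introduce the auxiliary potential execution $\xi'' = \exe{q_0',\phi\eta a}$, obtained by sliding $a$ to the end past $\eta$. First, since $\phi\eta\myequiv\varepsilon\tau$ and $q_0'\in\ball{\delta_0}{q_0}$, the prefix $\exe{q_0',\phi\eta}$ is $(\delta_0,\varepsilon)$-related to $\exe{q_0,\tau}$, so by Definition~\ref{por:def:represent} its last continuous state is within $r$ of $\exe{q_0,\tau}.\lstate.X$; applying $a$ and the discrepancy bound (using the matching locations) gives $|\xi.\lstate.X - \xi''.\lstate.X| \le \beta_a(r)$. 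Second, $\xi'$ and $\xi''$ share the prefix $\phi$ from the common initial state $q_0'$, reach a common state, and thereafter run $a\eta$ versus $\eta a$; by (a) and Lemma~\ref{por:lem:insert}, $|\xi'.\lstate.X - \xi''.\lstate.X| \le \gamma_{len(\eta)-1}(\varepsilon) \le \gamma_{t-k-1}(\varepsilon)$, the last step by the index-monotonicity of $\gamma$ (Proposition~\ref{por:prop:gamma}(i)) together with $len(\eta)\le t-k$. The triangle inequality gives $|\xi'.\lstate.X - \xi.\lstate.X| \le \beta_a(r) + \gamma_{t-k-1}(\varepsilon)$, exactly the value returned in the $k<t$ branch.

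The branch $k=t$ is the degenerate case: then $len(\phi)\ge t$ forces $len(\eta)=0$, so $\trace{\xi'}=\phi a$ with $\phi\myequiv\varepsilon\tau$, the execution $\xi''$ coincides with $\xi'$, the $\gamma$-term drops out, and the first estimate alone gives $\beta_a(r)=r'$. One routine bookkeeping point throughout is that $\Post$ builds $\gamma$ from $\beta = \max_{b\in\tau a}\beta_b$, whereas Lemma~\ref{por:lem:insert} uses the maximum over the actions of $\eta$ only; since $\eta$'s actions form a subset of those of $\tau a$ and each $\gamma_n$ is monotone in its underlying (non-decreasing) $\beta_{max}$, the $\gamma$ computed by $\Post$ dominates the one from the lemma, so every bound goes through. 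I would treat this domination and the monotonicity of the $\beta_a$ as the only non-structural facts to verify.
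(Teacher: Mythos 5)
Your proof is correct and takes essentially the same route as the paper's: the same split of $\trace{\xi'}$ at the last occurrence of $a$ into $\phi a \eta$, the same auxiliary execution $\xi'' = \exe{q_0',\phi\eta a}$, the same two bounds ($\beta_a(r)$ via the $\bloatf$ property of $r$, and $\gamma_{len(\eta)-1}(\varepsilon) \le \gamma_{t-k-1}(\varepsilon)$ via Lemma~\ref{por:lem:insert}), and the same case split on $k=t$ versus $k<t$. The only difference is that you explicitly verify the two hypotheses the paper invokes silently — that $\eta\mysim{\varepsilon} a$ and that $\phi\eta\myequiv{\varepsilon}\tau$, both by tracking the appended occurrence of $a$ through the swap derivation — which tightens the argument but does not change its structure.
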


\begin{proof}
Let us fix some initial state $q_0$ and initial set size $\delta_0$.

Let  $\xi_{t} = \xi_{q_0, \tau}$ be the potential execution starting from $q_0$ by taking the trace $\tau$, and $\xi_{t+1} = \exe{q_0, \tau a}$.
Fix any $\xi'$ that is ($\delta_0,\varepsilon$)-related to $\xi_{t+1}$.
From Proposition~\ref{por:prop:perm_loc}, $\xi'.\lstate.L = \xi_{t+1}.\lstate.L$.
It suffice to prove that $|\xi'.\lstate.X - \xi_{t+1}.\lstate.X| \leq r'$.

\begin{figure}[tbhp!]
	\centering
	\includegraphics[trim={0 0 0 0},clip, width=0.45\textwidth]{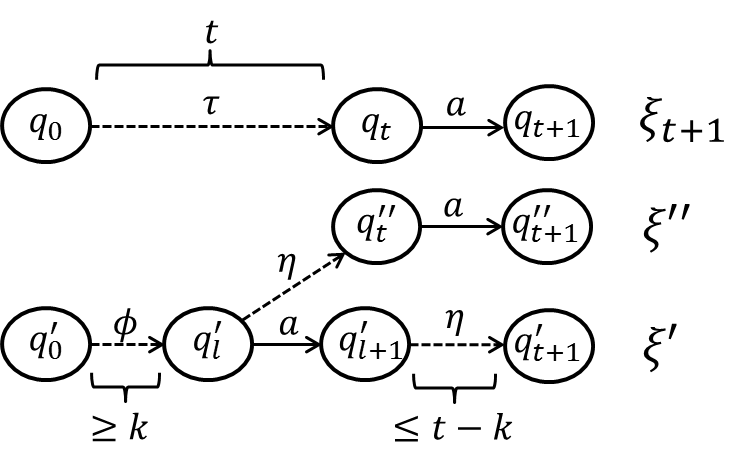}
	\caption{Potential executions $\xi_{t+1}$, $\xi'$,$\xi''$ }
	\label{fig:exes_algo}
\end{figure}

Since $\trace{\xi'} \myequiv \varepsilon \tau a$, action $a$ is in the sequence $\trace{\xi'}$.
Partitioning $\trace{\xi'}$ on the last occurrence of $a$, we get $\trace{\xi'} =\phi a \eta$ for some $\phi,\eta\in A^*$ with $a\not\in \eta$.
Since $k$ is the $eep$, from Definition~\ref{def:anchor}, the position of the  last occurrence of $a$ on $\trace{\xi'}$ is at least $k$.
Hence we have $len(\phi) \geq k$ and $len(\eta) = t - len(\phi) \leq t-k$.
We construct another potential execution $\xi'' = \exe{q'_0, \phi\eta a}$ with the same initial state as $\xi'$.
The executions $\xi_{t+1},\xi'$ and $\xi''$ are illustrated in Figure~\ref{fig:exes_algo}.
$q_t$ is the last state of the execution $\xi_{t}$. From the assumption, $\ball{r}{q_t}$ is an over-approximation of
the reachset at step $t$.
We note that the length $t$ prefix $\xi''$ is ($\delta_0,\varepsilon$)-related to $\xi_t$. 
Therefore, $|q_t.X - q_t''.X| \leq r$. Using the discrepancy function of action $a$, we have
\begin{equation}
\label{por:eq:algo1}
|q_{t+1}.X - q_{t+1}''.X| \leq \beta_a(|q_t.X - q_t''.X|) \leq \beta_a(r).
\end{equation}
We will quantify the distance between $\xi'$ and $\xi''$.
There are two cases: \\
(i) If $k = t$ then, $len(\eta) \leq t-k = 0$, that is, $\eta$ is an empty string. 
Hence, $\xi'$ and $\xi''$ are indeed identical and $q'_{t+1} = q''_{t+1}$. 
Thus from~\eqref{por:eq:algo1},
$
|q_{t+1}.X-q'_{t+1}.X| = |q_{t+1}.X - q_{t+1}''.X| \leq \beta_a(r),
$ 
and the lemma holds.
(ii) \chuchu{Otherwise $k < t$ and from Lemma~\ref{por:lem:insert}, we can bound the distance between $\xi'$ and $\xi''$ as
$|q_{t+1}'.X-q_{t+1}''.X| \leq \gamma_{len(\eta)-1}(\varepsilon) \leq \gamma_{t-k-1}(\varepsilon).$
Combining with~\eqref{por:eq:algo1}, we get
$
|q_{t+1}.X-q'_{t+1}.X| \leq |q_{t+1}.X-q''_{t+1}.X| + |q'_{t+1}.X-q''_{t+1}.X| \leq 
\beta_a(r) + \gamma_{t-k-1}(\varepsilon).
$}
\qed
\end{proof}

Next, we present the main reachability algorithm which uses $\Post$. 
Algorithm~\ref{alg:main} takes inputs of an initial set $\Theta$, time horizon $T$, two parameters $\delta_0, \varepsilon \geq 0$, and a set of discrepancy functions $\{\beta_a\}_{a\in A}$. It returns the 
over-approximation of the reach set for each time step.

The algorithm first computes a $\delta_0$-cover $Q_0$ of the initial set $\Theta$ such that 
$\Theta \subseteq \cup_{q_0 \in Q_0}\ball{\delta}{q_0}$ (Line \ref{alg:main:initial}). 
The {\bf for}-loop from Line \ref{alg:main:bigout} to Line \ref{alg:main:endbigout} will compute the over-approximation of the reachset from each initial cover $\reach{}{\ball{\delta_0}{q_0}, t}$.  The over-approximation from each cover is represented as a collection $  \langle \RT_{0},\dots, \RT_{T} \rangle $, where each $\RT_{t} $ is a set of tuples $\langle \tau_t, q_t, \delta_t \rangle $ such that 
\begin{inparaenum}[(i)]
\item the traces $\RT_t \restr 1$ and their $\varepsilon$-equivalent traces contain the traces of all valid executions of length $t$,
\item the traces in $\RT_t \restr 1$ are mutually non-$\varepsilon$-equivalent,
\item for each tuple $\delta_{t}$ is the $(\delta_0, \varepsilon)$-$\bloatf$ for $\xi_{q_0, \tau_t}$,
\end{inparaenum}

 For each initial cover $\ball{\delta_0}{q_0}$, $R_0$ is initialized as the tuple of empty string, the initial state $q_0$ and size $\delta_0$ (Line \ref{alg:main:R0}). Then the reachset over-approximation is computed recursively for each time step 
by checking for the maximum set of enabled actions $\EA$ for the set of states $\ball{\delta_{t}}{q_{t}}$ (Line \ref{alg:main:ea}), and try to attach each enabled action $a \in \EA$ to $\tau_t$ unless $\tau_t a$ is $\varepsilon$-equivalent to some length $t+1$ trace that is already in $\RT_{t+1} \restr 1$. This is where the major reduction happens using approximate partial order reduction. If not, the $(\delta_0, \varepsilon)$-$\bloatf$ for $\xi_{q_0, \tau_t a}$ will be computed using $\Post$, and new tuple  $\langle \tau_t a, q_{t+1}, \delta_{t+1} \rangle$ will be added to $\RT_{t+1}$ (Line \ref{alg:main:newaction}). 

If there are $k$ actions in total and they are mutually $\varepsilon$-independent, then as long as the numbers of each action in $\tau_t$ and $\tau'_t$ are the same, $\tau_t \myequiv\varepsilon \tau'_t$. Therefore, in this case, $\RT_{t}$ contains at most $\binom{t+k-1} {k-1}$ tuples. Furthermore, for any length $t$ trace $\tau_t$, if all actions in $\tau_t$ are mutually $\varepsilon$-independent, the algorithm can reduce the number of executions explored by $O(t!)$.
Essentially, each $\tau_t \in \RT_t \restr 1$ is a representative trace for the length $t$ $\varepsilon$-equivalence class.

\begin{algorithm}[th!]
\caption{Reachability algorithm to over-approximate $\reach{}{\Theta,T}$}
\label{alg:main}
\begin{algorithmic}[1]
\State {\bf Input}: $\Theta, T, \varepsilon, \delta_0, \{\beta_a\}$;	
\State $Q_0 \gets \delta_0$-$cover(\Theta)$; $\R \gets \emptyset$ \label{alg:main:initial}
\For{$q_0\in Q_0$} \label{alg:main:bigout}
	\State $\RT_0 \gets \{\langle '', q_0, \delta_0 \rangle\}$; \label{alg:main:R0}
	\For{$t = [T]$} \label{alg:main:eachtime}
		\State $\RT_{T} \gets \emptyset$;
		\For{ each $\langle \tau_t, q_t, \delta_t \rangle \in \RT_{t}$}
			\State $\EA \gets enabledactions(\ball{\delta_t}{q_t})$; \label{alg:main:ea}
			\For{$a \in EA$}
				\If{$\forall \tau_{t+1} \in R_{t+1} \restr 1, \neg \left(\tau_t a \myequiv \varepsilon \tau_{t+1} \right)$}; \label{alg:main:if}
						\State $q_{t+1} \gets a(q_t)$ \label{alg:main:newstate}
						\State $\delta_{t+1} \gets \Post(\tau_t, a, \delta_t, \varepsilon, \{\beta_a\}_{a\in A})$ \label{alg:main:newdelta}
						\State $\RT_{t+1} \gets \RT_{t+1} \cup \langle \tau_t a, q_{t+1}, \delta_{t+1} \rangle $ \label{alg:main:newaction}
				\EndIf \label{alg:main:endif}
			\EndFor
		\EndFor
	\EndFor \label{alg:main:eachtimeend}
	\State $\R \gets \R \cup \langle \RT_{0},\dots, \RT_{T} \rangle$
\EndFor \label{alg:main:endbigout}
\State \Return $\R$;  
\end{algorithmic}
\end{algorithm}

Theorem \ref{them:soundness} shows that Algorithm \ref{them:soundness} indeed computes an over-approximation for the reachsets, and Theorem \ref{them:precision} states that the over-approximation can be made arbitrarily precise by reducing the size of $\delta_0, \varepsilon$.

\begin{theorem}[Soundness]
\label{them:soundness}
Set $\R$ returned by Algorithm~\ref{alg:main}, satisfies $\forall t = 0,\dots,T,$ 
\begin{align}
\label{eq:sound}
\reach{}{\Theta, t} \subseteq \bigcup_{\RT_{t} \in \R \restr t} \bigcup_{\langle \tau, q, \delta  \rangle \in \RT_{t}} \ball{\delta}{q}.
\end{align}
\end{theorem}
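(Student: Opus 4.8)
The plan is to deduce the containment~\eqref{eq:sound} from a loop invariant maintained by the outer \textbf{for}-loop of Algorithm~\ref{alg:main}. Fix $q_0 \in Q_0$ and let $\langle R_0, \dots, R_T\rangle$ be the collection it produces. I would prove, by induction on $t$, that each $R_t$ satisfies the three properties asserted in the text: (i) every valid execution $\xi$ of length $t$ whose initial state lies in $\ball{\delta_0}{q_0}$ has $\trace{\xi} \myequiv\varepsilon \tau$ for some $\langle \tau, q, \delta\rangle \in R_t$; (ii) the traces in $R_t \restr 1$ are pairwise non-$\varepsilon$-equivalent; and (iii) for each $\langle \tau, q, \delta\rangle \in R_t$ we have $q = \xi_{q_0,\tau}.\lstate$ and $\delta$ is a $(\delta_0,\varepsilon)$-$\bloatf$ for $\xi_{q_0,\tau}$. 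Once (i) and (iii) are in hand, the theorem follows: given $s \in \reach{}{\Theta,t}$, choose a valid execution $\xi$ of length $t$ with $\xi.\lstate = s$ and $\xi.\fstate = \hat q_0 \in \Theta$; since $Q_0$ is a $\delta_0$-cover, $\hat q_0 \in \ball{\delta_0}{q_0}$ for some $q_0 \in Q_0$; by (i) there is a tuple $\langle \tau, q, \delta\rangle \in R_t$ with $\trace{\xi}\myequiv\varepsilon \tau$, so $\xi$ is $(\delta_0,\varepsilon)$-related to $\xi_{q_0,\tau}$ (Definition~\ref{por:def:close}), and by (iii) together with Definition~\ref{por:def:represent} and Proposition~\ref{por:prop:perm_loc} we get $\xi.\lstate.L = q.L$ and $|\xi.\lstate.X - q.X| \le \delta$, i.e. $s \in \ball{\delta}{q}$, one of the balls on the right-hand side of~\eqref{eq:sound}.

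For the induction, the base case $t=0$ is immediate: $R_0 = \{\langle '', q_0, \delta_0\rangle\}$, the only length-$0$ trace is the empty one, and $\delta_0$ is trivially a $(\delta_0,\varepsilon)$-$\bloatf$ for the single-state execution $\xi_{q_0,''}$. For the step, assuming the invariant for $R_t$, property (iii) for a newly inserted tuple $\langle \tau_t a, q_{t+1}, \delta_{t+1}\rangle$ is exactly the conclusion of Lemma~\ref{por:lem:sound}: since $\delta_t$ is a $(\delta_0,\varepsilon)$-$\bloatf$ for $\xi_{q_0,\tau_t}$ by the inductive hypothesis, $\delta_{t+1} = \Post(\tau_t,a,\delta_t,\varepsilon,\{\beta_a\})$ is a $(\delta_0,\varepsilon)$-$\bloatf$ for $\xi_{q_0,\tau_t a}$, and $q_{t+1} = a(q_t)$ is its last state. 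Property (ii) holds by construction, because the guard on Line~\ref{alg:main:if} inserts $\tau_t a$ only when it is not $\varepsilon$-equivalent to any trace already present in $R_{t+1}\restr 1$.

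The crux is property (i) for $R_{t+1}$. Given a valid length-$(t+1)$ execution $\xi$ from $\ball{\delta_0}{q_0}$, write $\trace{\xi} = \sigma b$ and let $\zeta$ be its length-$t$ prefix, itself a valid execution with the same initial state. The inductive hypothesis (i) supplies a tuple $\langle \tau_t, q_t, \delta_t\rangle \in R_t$ with $\sigma \myequiv\varepsilon \tau_t$. The key step is to show $b$ is enabled at some state of $\ball{\delta_t}{q_t}$, so that $b \in \EA$ when the algorithm processes this tuple: because $\zeta$ is $(\delta_0,\varepsilon)$-related to $\xi_{q_0,\tau_t}$, invariant (iii), Definition~\ref{por:def:represent}, and Proposition~\ref{por:prop:perm_loc} give $\zeta.\lstate \in \ball{\delta_t}{q_t}$, and $b$ is enabled at $\zeta.\lstate$ since $\xi$ is valid. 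Then the inner loop either inserts $\tau_t b$ into $R_{t+1}$ or finds it already $\varepsilon$-equivalent to some $\tau_{t+1} \in R_{t+1}\restr 1$; either way $R_{t+1}\restr 1$ contains a trace $\varepsilon$-equivalent to $\tau_t b$. Finally, using the congruence of $\myequiv\varepsilon$ under right-concatenation (an easy consequence of Definition~\ref{por:def:equiv}: a swap witnessing $\sigma \myequiv\varepsilon \tau_t$ stays valid after appending $b$), $\sigma \myequiv\varepsilon \tau_t$ yields $\trace{\xi} = \sigma b \myequiv\varepsilon \tau_t b$, establishing (i).

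I expect the main obstacle to be this enabledness argument, i.e. justifying that the over-approximating ball $\ball{\delta_t}{q_t}$ attached to the \emph{representative} trace $\tau_t$ actually contains the last state of the length-$t$ prefix of the \emph{true} execution $\xi$. This is precisely where the three invariants interlock: (i) produces the representative, (iii) guarantees its ball over-approximates, and the $(\delta_0,\varepsilon)$-relatedness of the prefix to the representative's potential execution is what lets Definition~\ref{por:def:represent} apply; it is the only place where soundness of the $\EA$-based pruning is used. I would take care that ``valid execution'' is preserved under taking prefixes (so the hypothesis applies to $\zeta$) and that $\EA$ is read as the set of actions enabled at \emph{some} state of $\ball{\delta_t}{q_t}$, since it is this over-approximation, rather than the exact set enabled at $\zeta.\lstate$, that the algorithm actually computes.
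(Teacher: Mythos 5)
Your proof is correct and takes essentially the same route as the paper's: the same induction on $t$ over the per-$q_0$ loop, with the invariant that the traces in $R_t$ together with their $\varepsilon$-equivalent traces cover all valid length-$t$ traces and that each stored $\delta_t$ is a $(\delta_0,\varepsilon)$-$\bloatf$ for $\xi_{q_0,\tau_t}$, using Lemma~\ref{por:lem:sound} for newly inserted tuples and transitivity of $\myequiv\varepsilon$ for the pruned case. If anything, your write-up makes explicit two steps the paper's proof glosses over --- the enabledness argument that $b\in \EA$ because $\zeta.\lstate \in \ball{\delta_t}{q_t}$, and the right-concatenation congruence of $\myequiv\varepsilon$ --- both of which are needed and hold as you argue.
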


\begin{proof}
Since $\cup_{q_0 \in Q_0}\ball{\delta}{q_0} \supseteq \Theta$, it suffices to show that at each time step $t = 0,\dots,T$,  the $\RT_{t}$ computed in the {\bf for}-loop from Line \ref{alg:main:R0} to Line \ref{alg:main:eachtimeend} satisfy $\reach{}{\ball{\delta_0}{q_0}, t} \subseteq \cup_{\langle \tau, q, \delta  \rangle \in \RT_{t}} \ball{\delta}{q}$. Fix any $q_0 \in Q_0$, we will prove by induction.

Base case: initially before any action happens, the only valid trace is the empty string $''$ and the initial set is indeed $\ball{\delta_0}{q_0}$.

Induction step: assume that at time step $ t < T$, the union of all the traces $R_{t} \restr 1$ and their $\varepsilon$-equivalent traces contain the traces of all length $t$ valid executions, and for each tuple  $\langle \tau_{t}, q_{t}, \delta_{t}  \rangle \in \RT_{t}
$, $\delta_t$ is a $(\delta_0, \varepsilon)$-$\bloatf$  for $\xi_{q_0, \tau_t}$. That is, $\ball{\delta_{t}}{q_{t}}$ contains the final states of all ($\delta_0, \varepsilon$)-related executions to $\xi_{q_0, \tau_t}$. This is sufficient for showing that $\reach{}{\ball{\delta_0}{q_0}, t}
 \subseteq \cup_{\langle \tau, q, \delta  \rangle \in \RT_{t}} \ball{\delta}{q}$.
 
 Since for each tuple contained in $\RT_t$, we will consider the maximum possible set of actions enabled at Line \ref{alg:main:ea} and attempts to compute
 the $(\delta_0, \varepsilon)$-$\bloatf$ for $\xi_{q_0, \tau_t a}$. If $\tau_t a$ is not  $\varepsilon$-equivalent to any of the length $t+1$ traces that has already been added to $\RT_{t+1}$, then Lemma \ref{por:lem:sound}
 guarantees that the $q_{t+1}$ and $\delta_{t+1}$ computed at Line \ref{alg:main:newstate} and \ref{alg:main:newdelta} satisfy that $\delta_{t+1}$ is the $(\delta_0, \varepsilon)$-$\bloatf$ for $\xi_{q_0, \tau_t a}$. 
Otherwise, $\tau_t a$ is $\varepsilon$-equivalent to some trace $\tau_{t+1}$ that has already been added to $\RT_{t+1}$, then for any initial state $q_0'$ that is $\delta_0$-close to $q_0$, $\xi_{q_0', \tau_t a}$ and $\xi_{q_0,\tau_{t+1}}$ are $(\delta_0,\varepsilon)$-related and the final state of $\xi_{q'_0, \tau_t a}$ is already contained in $\ball{\delta_{t+1}}{q_{t+1}}$. Therefore,  the union of all the traces $R_{t+1} \restr 1$ and their $\varepsilon$-equivalent traces contain the traces of all length $t+1$ valid executions, and for each tuple  $\langle \tau_{t+1}, q_{t+1}, \delta_{t+1}  \rangle \in \RT_{t+1}$, $\delta_{t+1}$ is a $(\delta_0, \varepsilon)$-$\bloatf$ for $\xi_{q_0, \tau_{t+1}}$, which means  ${\reach{}{\ball{\delta_0}{q_0}, t+1}} \subseteq \cup_{\langle \tau, q, \delta  \rangle \in \RT_{t+1}} \ball{\delta}{q}$. So the theorem holds.
\qed
\end{proof}

\begin{theorem}[Precision]
\label{them:precision}
For any $r>0$, there exist $\delta_0,\varepsilon>0$ such that, the reachset over-approximation $\R$ computed by Algorithm~\ref{alg:main} satisfies $\forall t = 0,\dots,T,$ 
\begin{equation}
\bigcup_{\RT_{t} \in \R \restr t} \bigcup_{\langle \tau, q, \delta  \rangle \in \RT_{t}} \ball{\delta}{q} \subseteq\ball{r}{\reach{}{\Theta,t}}.
\end{equation}
\end{theorem}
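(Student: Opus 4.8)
The plan is to split the required inclusion into two quantitative facts, each driven by one of the parameters, and then recombine them by the triangle inequality. Fact (A): every discrepancy factor $\delta_t$ produced by $\Post$ is bounded by a quantity $\Delta(\delta_0,\varepsilon)$ that tends to $0$ as $\delta_0,\varepsilon\to 0$. Fact (B): every center $q_t$ of a tuple $\langle\tau,q_t,\delta_t\rangle\in\RT_t$ lies within a distance $\rho_t(\delta_0,\varepsilon)\to 0$ of the true reachset $\reach{}{\Theta,t}$. Granting these, any $z\in\ball{\delta_t}{q_t}$ has $z.L=q_t.L$ and $\mathrm{dist}(z.X,\reach{}{\Theta,t})\le \delta_t+\rho_t \le \Delta(\delta_0,\varepsilon)+\rho_t(\delta_0,\varepsilon)$, which is forced below $r$. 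Since $\Theta$ is compact I may take the $\delta_0$-cover $Q_0$ finite, and for fixed $T$ over a finite action set the number of explored tuples is finite, so one choice of $(\delta_0,\varepsilon)$ works uniformly for all $t\le T$ and all tuples.

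For (A) I would unroll the recursion in $\Post$. Each update is either $\delta_{t+1}=\beta_a(\delta_t)$ or $\delta_{t+1}=\beta_a(\delta_t)+\gamma_{t-k-1}(\varepsilon)$, and both are bounded by $\beta_{max}(\delta_t)+\gamma_T(\varepsilon)$ using $\beta_{max}=\max_{a\in A}\beta_a$ and the monotonicity $\gamma_{t-k-1}\le\gamma_T$ from Proposition~\ref{por:prop:gamma}(i). Iterating from $\delta_0$ gives $\delta_t\le \beta_{max}^t(\delta_0)+\sum_{i<t}\beta_{max}^i(\gamma_T(\varepsilon))$. By Definition~\ref{por:def:discrep}(ii) every finite composition $\beta_{max}^j$ tends to $0$ as its argument does, so the first term vanishes as $\delta_0\to 0$; and $\gamma_T(\varepsilon)\to 0$ as $\varepsilon\to 0$ by Proposition~\ref{por:prop:gamma}(ii), so the finite sum vanishes as $\varepsilon\to 0$. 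This yields the uniform bound $\Delta(\delta_0,\varepsilon)\to 0$.

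For (B) I would induct on $t$. The base case is immediate since the cover points lie within $\delta_0$ of $\reach{}{\Theta,0}=\Theta$. I first record that $\reach{}{\Theta,t}$ is compact: for each valid trace the set of admissible initial states is closed (guards are closed, actions continuous) hence compact inside the compact $\Theta$, its image under the composed actions is compact, and there are finitely many length-$t$ traces. For the inductive step I argue by contradiction along a sequence $\delta_0^{(n)},\varepsilon^{(n)}\to 0$: suppose explored centers $q_{t+1}^{(n)}=a^{(n)}(q_t^{(n)})$, with $a^{(n)}\in\EA$, stay a fixed distance from $\reach{}{\Theta,t+1}$. As $A$ is finite, fix $a^{(n)}=a$. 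By the inductive hypothesis $q_t^{(n)}$ approaches $\reach{}{\Theta,t}$, and boundedness plus compactness give a subsequence $q_t^{(n)}\to\bar p\in\reach{}{\Theta,t}$. Because $a\in\EA$, there is a witness $q'^{(n)}\in\ball{\delta_t^{(n)}}{q_t^{(n)}}\cap\mathit{guard}(a)$; since $\delta_t^{(n)}\to 0$ we get $q'^{(n)}\to\bar p$, and closedness of $\mathit{guard}(a)$ forces $\bar p\in\mathit{guard}(a)$, so $a(\bar p)\in\reach{}{\Theta,t+1}$. As locations are discrete, eventually $q_t^{(n)}.L=\bar p.L$, and the discrepancy bound (continuity of $a$) gives $q_{t+1}^{(n)}.X=a(q_t^{(n)}).X\to a(\bar p).X$, contradicting the assumed separation.

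The hard part is exactly this inductive step of (B). Because $\EA$ is computed over the whole neighborhood $\ball{\delta_t}{q_t}$ rather than at $q_t$, an action may be explored even though it is enabled only at over-approximating states and not at any genuinely reachable state, and a priori its successor could sit far from the reachset. The crux of the resolution is the interplay between the closedness of guards and the compactness of $\reach{}{\Theta,t}$: as $\delta_0,\varepsilon$ shrink, the enabling witnesses are squeezed onto a convergent reachable point $\bar p$, closedness forces $a$ to be truly enabled there, and continuity of $a$ then carries the explored successor back onto $\reach{}{\Theta,t+1}$. I expect the main technical effort to be in making this squeezing argument uniform over all discrete locations, all $t\le T$, and the simultaneously-varying cover $Q_0$, so that a single pair $(\delta_0,\varepsilon)$ suffices for the stated inclusion.
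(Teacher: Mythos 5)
Your proposal is correct, and at the decisive step it takes a genuinely different---and in fact more complete---route than the paper. Both proofs share the same outer decomposition: (A) the radii $\delta_t$ produced by $\Post$ vanish uniformly as $\delta_0,\varepsilon \to 0$, and (B) the explored balls stay close to the true reachset, after which the triangle inequality finishes. For (B), however, the paper simply asserts that each explored ball $\ball{\delta_t}{q_t}$ \emph{intersects} $\reach{}{\Theta,t}$, on the grounds that ``there is an execution $\xi'$ from $q_0' \in \ball{\delta_0}{q_0}$ following a trace $\tau' \myequiv{\varepsilon} \tau_t$.'' That existence claim is never justified, and it is exactly the issue your last paragraph flags: since $\EA$ is computed over the whole ball $\ball{\delta_t}{q_t}$, the algorithm may append an action enabled only at spurious (unreachable) states of the over-approximation, so for \emph{fixed} $\delta_0,\varepsilon$ the explored trace $\tau_t$ need not be $\varepsilon$-equivalent to the trace of any valid execution, and the paper's intersection claim can fail outright. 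Your replacement---the induction on $t$ showing $\mathrm{dist}(q_t,\reach{}{\Theta,t}) \to 0$, using compactness of $\reach{}{\Theta,t}$ (finitely many traces, closed guards, continuous actions, compact $\Theta$), closedness of $\mathit{guard}(a)$ to force the limiting witness $\bar p$ to be genuinely enabled and reachable, and the discrepancy bound to carry $a(q_t^{(n)})$ to $a(\bar p) \in \reach{}{\Theta,t+1}$---is precisely the asymptotic statement that matches the theorem's existential quantification over $(\delta_0,\varepsilon)$, and it also covers the degenerate case where an explored trace has no valid counterpart at all (including an empty reachset). What the paper's route buys is brevity; what yours buys is an actual proof of the step the paper hand-waves. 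Your worry about uniformity over the varying cover $Q_0$ is already absorbed by your contradiction-plus-subsequence scheme (finitely many actions and locations, compact reachset); nothing about the specific cover enters the argument beyond the inductive hypothesis.

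One small repair in your Fact (A): the closed-form bound $\delta_t \le \beta_{max}^t(\delta_0) + \sum_{i<t}\beta_{max}^i(\gamma_T(\varepsilon))$ implicitly uses monotonicity and subadditivity of $\beta_{max}$, neither of which follows from Definition~\ref{por:def:discrep}. You do not need the closed form: from the correct one-step bound $\delta_{t+1} \le \beta_{max}(\delta_t) + \gamma_T(\varepsilon)$, the squeeze $0 \le \delta_t \le g_t$ and induction on $t \le T$ using $\beta_{max}(v) \to 0$ as $v \to 0$ already give uniform vanishing (alternatively, replace $\beta_{max}$ by its running-max envelope $v \mapsto \sup_{0 \le u \le v}\beta_{max}(u)$, which is monotone and still vanishes at $0$). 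This is the same recursion-limit argument the paper itself uses, so the flaw is cosmetic rather than structural.
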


\begin{proof}
From Proposition~\ref{por:prop:gamma}, 
for any $n$, $\gamma_n(\varepsilon)\rightarrow 0$ as $\varepsilon \rightarrow 0$.
From Definition~\ref{por:def:discrep}, for any $\delta_t$ and discrepancy function $\beta$, $\beta(\delta_t) \rightarrow 0$ as $\delta_t\rightarrow 0$.
Therefore, when Line \ref{alg:main:newdelta} of Algorithm~\ref{alg:main} is executed, $\delta_{t+1} \rightarrow 0$ as $\delta_t\rightarrow 0$ and $\varepsilon\rightarrow 0$.
Iteratively applying this observation leads that $\delta_t$ contained in any set $\RT_t$ converges to zero as $\delta_0\rightarrow 0$ and $\varepsilon \rightarrow 0$. 


Fix arbitrary $r>0$.
The set $\R$ is a union of approximations for each $\reach{}{\ball{\delta_0}{q_0}, T}$.
Fix any such $q_0, \delta_0$, it suffices to show that $\cup_{\langle \tau, q, \delta  \rangle \in \RT_{t}} \ball{\delta}{q} \subseteq \ball{r}{\reach{}{\Theta,t}}$ for small enough $\delta_0$ and $\varepsilon$. Moreover, it suffices to show that fix any $\langle \tau_t, q_t, \delta_t  \rangle \in \RT_{t}$,  $\ball{\delta}{q_t} \subseteq \ball{r}{\reach{}{\Theta,t}}$ for small enough $\delta_0$ and $\varepsilon$.

Since each $\delta_t$ is a $(\delta_0, \varepsilon)$-$\bloatf$ of the execution $\xi_{q_0,\tau_t}$ and $\delta_0,\varepsilon$, there is an execution $\xi'= \xi'_{q'_0,\tau'}$ from $q'_0 \in \ball{\delta}{q_0}$ following the trace $\tau' \myequiv\varepsilon \tau_t$.
By the definition of reachset, we have $\xi'(t) \in \reach{}{\Theta,t}$.
On the other hand, $\xi'$ is ($\delta_0,\varepsilon$)-related to the potential execution $\xi_{q_0,\tau_t}$, so $\xi'(t) \in \ball{\delta_t}{q_t}$.
That is, $\ball{\delta_t}{q_t}$ and the reachset $\reach{}{\Theta,t}$ has intersections at the state $\xi'(t)$.

The radius of at each time step $\delta_t$ can be made arbitrarily small as $\delta_0$ and $\varepsilon$ go to $0$.
We chose small enough $\delta_0$ and $\varepsilon$, such that the radius of $ \ball{\delta_{t}}{q_{t}}$ is less than $r/2$.
Therefore, $ \ball{\delta_{t}}{q_{t}}$  is contained in the radius $r$ ball of the reachset $\ball{r}{\reach{}{\Theta,t}}$.
\qed
\end{proof}
Notice that as $\delta_0$ and $\varepsilon$ go to $0$, the Algorithm \ref{alg:main} actually converges to a simulation algorithm which simulates every valid execution from a single  initial state.

\section{Experimental evaluation of effectiveness}
\label{por:sec:case}

We  discuss the results from evaluating Algorithm~\ref{alg:main} in three case studies. Our Python implementation runs on a standard laptop (Intel Core\textsuperscript{TM} i7-7600 U CPU, 16G RAM).
%


\paragraph{Iterative consensus.}
This is an instance of  $\auto{Consensus}$ (Example~\ref{por:ex:linear}) with  $3$ continuous variables and 3 actions $a_0,a_1,a_2$. 
We want to check if the continuous states  converge to $[-0.4,0.4]^3$ in 3 rounds
starting from a radius \chuchu{$0.5$} ball around $[2.5,0.5,-3]$.
Figure~\ref{fig:linear_ex} (Left) shows reachset over-approximation 
computed  and projected on $x[0]$. The blue and red curves give the bounds.  
As the figure shows, $x[0]$ converges to  $[-0.4, 0.4]$ at round 3; and so do $x[1]$ and $x[2]$ (not shown).
We also  simulated $100$ random valid executions (yellow curves) from the initial set and validate that indeed the over-approximation is sound. 

Recall, three actions can occur in any order in each round, i.e., $3! = 6$ traces per round, and $6^{3} = 216$  executions from a single initial state up to $3$ rounds.
We showed in Example~\ref{por:ex:linear2} that  $a_0\mysim\varepsilon a_1$ and $a_0\mysim\varepsilon a_2$ with $\varepsilon=0.1$. Therefore, $a_0a_1a_2 \myequiv\varepsilon a_1a_0a_2 \myequiv\varepsilon a_1a_2a_0$ and $a_0a_2a_1 \myequiv\varepsilon a_2a_0a_1 \myequiv\varepsilon a_2a_1a_0$, and Algorithm~\ref{alg:main}  explored only $2$ (length $12$) executions from a set of initial states for computing the bounds. The running time for Algorithm \ref{alg:main} is 1 millisecond while exploring  all valid executions from even only a single  state took 20 milliseconds.

\begin{figure}[bpht!]
\centering
\includegraphics[width=.4\textwidth]{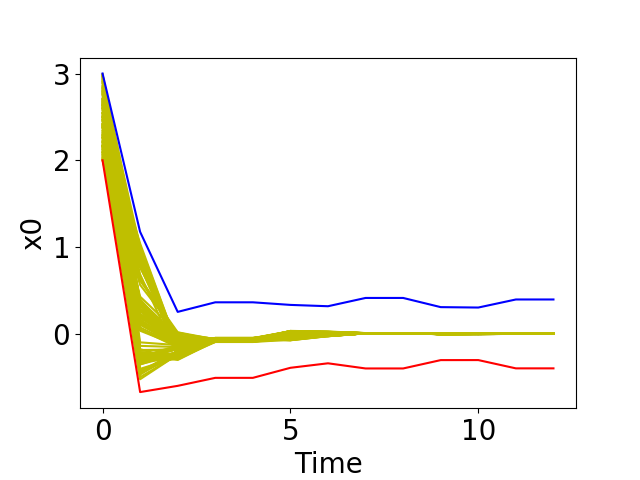}
\includegraphics[width=0.4\textwidth]{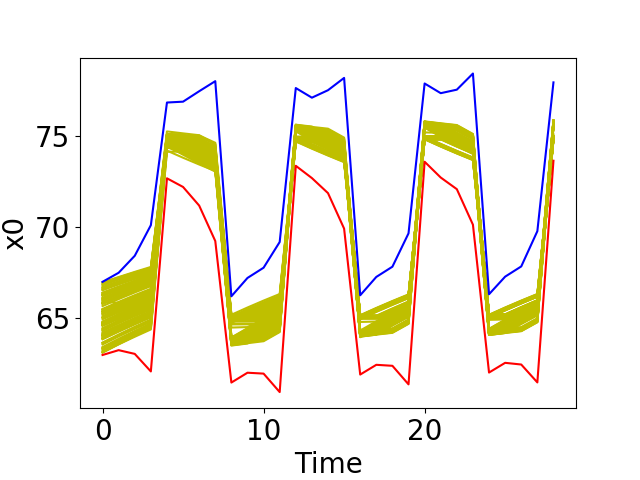}
\caption{\scriptsize Reachset computations. The blue curves are the upper bound of the reachsets and the red curves are the lower bound of the reachsets. Between the blue and red curves, the yellow curves are $100$ random simulations of valid executions.
{\em Left:} Linear transition system. {\em Right:} Room heating system.}
\label{fig:linear_ex}
\end{figure}
\paragraph{Platoon.}
Consider an $N$ car platoon on a single lane  (\chuchu{see Figure~7 in Appendix~\ref{app:platoon} for the pseudocode and details}). Each car can choose one of three actions at each time step: $a$ (accelerate), $b$ (brake), or $c$ (cruise). Car $0$ can choose any action at each time step; remaining cars try to keep safe distance with predecessor by choosing accelerate ($a$) if the distance is more than $50$,  brake ($b$) if the distance is less than $30$, and cruise ($c$) otherwise. 

Consider a 2-car platoon and a time horizon of $T =10$. We want to verify that the cars maintain  safe separation. Reachset over-approximations projected on the position variables are shown in Figure \ref{fig:cars}, with $100$ random simulations of valid executions as a sanity check.
Car 0 has lots of choices  and it's position over-approximation diverges (Figure~\ref{fig:cars}). Car 1's position depends on its initial relative distance with Car 0. It is also easy to conclude from Figure~\ref{fig:cars} that two cars maintain safe relative distance for these different initial states.

From a single initial state, in every step, Car 0 has $3$ choices, and therefore there are $3^{10}$ possible executions. Considering a range of initial positions for two cars, there are infinitely many execution, and $9^{10}$ (around 206 trillion) possible traces.
With $\epsilon = 0.282$, Algorithm~\ref{alg:main} explored a maximum of $\binom{18}{8} = 43758$ traces; the concrete number varies for  different initial sets. The running time for Algorithm \ref{alg:main} is 5.1 milliseconds while exploring all valid executions from even only a single  state took 2.9 seconds.

For a 4-car platoon and a time horizon of $T =10$,
there are  $81^{10}$ possible traces considering a range of initial positions.
With $\varepsilon = 0.282$, Algorithm~\ref{alg:main} explored $7986$ traces to conclude that all cars maintain safe separation for the setting where all cars are initially separated by a distance of $40$ and has an initial set radius of $4$. The running time for Algorithm \ref{alg:main} is 62.3 milliseconds, while exploring all valid executions from even only a single state took 6.2 seconds.

%
%
%

\begin{figure}[bpht!]
\centering
\includegraphics[width=0.32\textwidth]{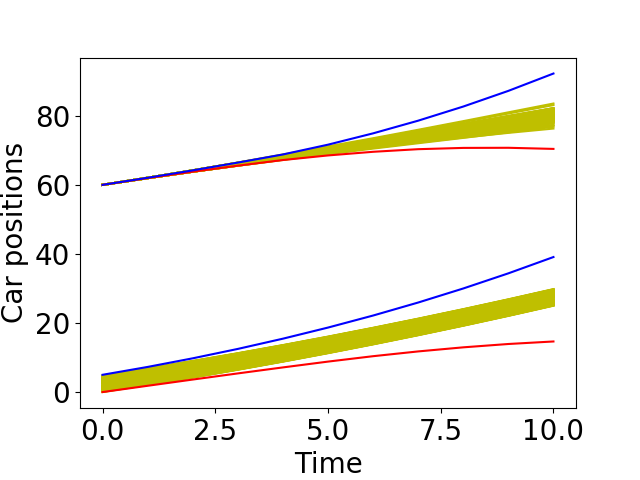}
\includegraphics[width=0.32\textwidth]{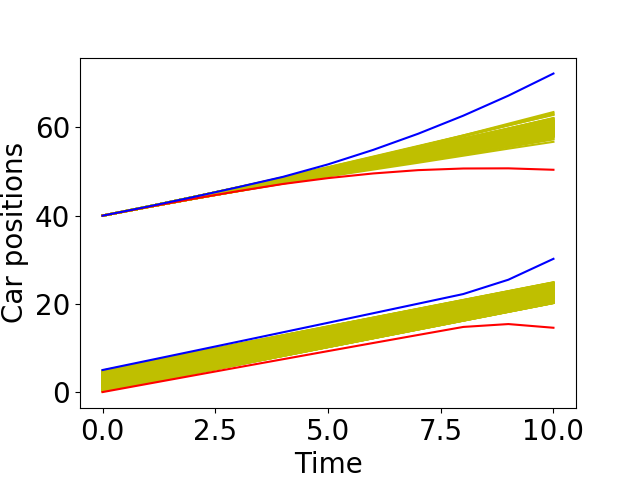}
\includegraphics[width=0.32\textwidth]{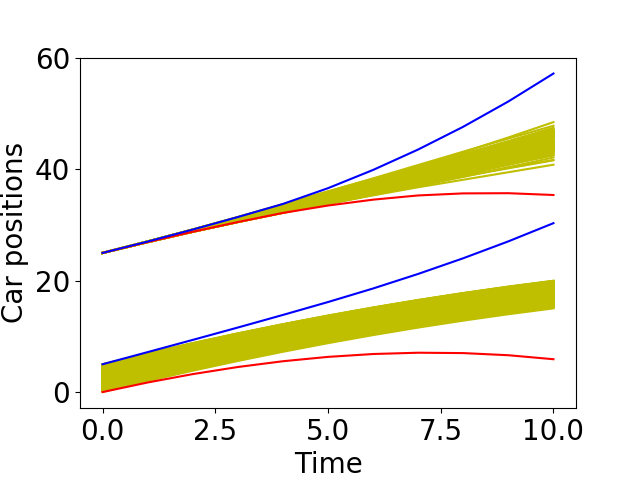}
\caption{\scriptsize Position over-approximations for 2 cars. The blue curves are the upper bound of the reachsets and the red curves are the lower bound of the reachsets. Between the blue and red curves, the yellow curves are $100$ random simulations of valid executions. Car1's initial position is in the range $[0,5]$, Car2's initial position is $60$ (Left), $40$ (Center) and $25$ (Right).}
\label{fig:cars}
\end{figure}

\paragraph{Building heating system.}
Consider a building with $N$ rooms,  each with a heater (\chuchu{see Appendix~\ref{sec:room}} for pseudocode and details). For $i\in[N]$, $x[i] \in \reals$ is the temperature of room $i$ and $m[i] \in\{0,1\}$ captures the off/on state of it's heater. The controller  measures the temperature of rooms periodically; based on these measurements ($y[i]$)  heaters turn on or off. These decisions are made asynchronously across rooms in arbitrary order. The room temperature $x[i]$ changes linearly according to the heater input $m[i]$, the thermal capacity of the room,  and the thermal coupling across adjacent rooms as given in the benchmark problem of~\cite{fehnker2004benchmarks}. For $i \in [N]$, actions $\auto{on}_i,\auto{off}_i$ capture the decision making process of room $i$ on whether or not to turn on the heater. 
Time elapse is captured by a  $\act{flow}$ action that updates the temperatures.
We want to verify that the room temperatures remain in the $[60, 79]$ range.

Consider a building with   $N=3$ rooms. 
\chuchu{In Appendix~\ref{sec:room}, we provide computation details to show that for any $i,j\in[3]$ with $i\neq j$, $a\in\{\act{on}_i,\act{off}_{i}\}$ and $b\in\{\act{on}_j,\act{off}_j\}$, $a\mysim \varepsilon b$ with $\varepsilon = 0.6$; but, $\act{flow}$ is not independent with any other actions.}
Computed reachset over-approximation for 8 rounds and  projected on the temperature of Room 0 is shown in  Figure~\ref{fig:linear_ex}~({\em Right}). Indeed, temperature of Room 0 is contained within the range.

For a round, where each room makes a decision once in arbitrary order, there are  $3! = 6$ $\varepsilon$-equivalent action sequences.
Therefore, from a single initial state, there are $6^8$ (1.6 million) valid executions. Algorithm~\ref{alg:main}, in this case explore only one (length $32$) execution with $\varepsilon = 0.6$ to approximate all executions starting from an initial set with radius $\delta = 2$.
The running time for Algorithm \ref{alg:main} is 1 millisecond while exploring all valid executions from even only a single  state took 434 seconds.

\section{Conclusion}
We proposed a partial order reduction technique for reachability analysis of infinite state transition systems that exploits approximate independence and bounded sensitivity of actions to reduce the  number of executions explored.
This relies on a novel  notion of $\varepsilon$-independence that generalizes the traditional notion of independence by allowing approximate commutation of actions. 
With this $\varepsilon$-independence relation, 
we have developed an algorithm for  soundly over-approximating reachsets of all executions using only  $\varepsilon$-equivalent traces. 
The over-approximation can also be made arbitrarily precise by reducing the size of $\delta,\varepsilon$.
%
In experimental evaluation with three case studies we observe that it can reduce  the number of executions explored exponentially compared to explicit computation of all executions.

%
The results suggest several  future research directions. In Definition~\ref{por:def:appind}, $\varepsilon$-independent actions are required to be approximately commutative globally.
For reachability analysis, this  definition could be relaxed to actions that approximately commute locally over parts of the state space.  
An orthogonal  direction is to apply this reduction technique to verify temporal logic properties and extend it to hybrid models.

\bibliographystyle{splncs03}
\bibliography{por,iss,dp,sayan1}

\appendix
\section{Appendix}

\subsection{Algorithm to compute the earliest equivalent point}
\label{app:eep}
In the following algorithm, we find the earliest equivalent point $eep$ of action $a$ on an action sequence $\tau$.
For any trace $\tau$ and action $a$, $eep(\tau,a,\varepsilon)$ constructs a trace $\phi\in A^*$.
Initially $\phi$ is set to be the empty sequence. 
Iteratively, from the end of $\tau$,  we add action $\tau(t)$ to $\phi$ if it is not independent to the entire trace $\phi a$.
We will prove that, length of $\phi$ gives the $eep$ of action $a$ on trace $\tau$.
The time complexity of the algorithm is at most $O(n^2)$, where $n$ is the length of trace $\tau$.

\begin{algorithm}
\caption{eep($\tau, a, \varepsilon$)}
\label{por:algo:anchor}
\begin{algorithmic}[1]
   \State $\phi \gets \langle \rangle$;
   \State  $T \gets len(\tau)$;
   \For{$t = T-1 : 0$}
   	   \If{$\exists b \in \phi a,\tau(t) \not\mysim{\varepsilon} b$ \label{por:ln:anchor1}} 
		\State $\phi\gets \tau(t)\phi$; \label{por:ln:anchor2}
	   \EndIf
   \EndFor
   \State \Return $len(\phi)$;
\end{algorithmic}
\end{algorithm}

\begin{lemma}
\label{por:lem:anchor}
For any action $a\in A$ and trace $\tau\in A^*$, the function $eep(\tau, a, \varepsilon)$ computes the $eep$ of $a$ on $\tau$.
\end{lemma}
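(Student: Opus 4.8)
The plan is to prove the two inequalities $\eep(\tau,a,\varepsilon)\le len(\phi)$ and $\eep(\tau,a,\varepsilon)\ge len(\phi)$ separately, where $\phi$ denotes the sequence the algorithm builds (so its return value is $len(\phi)$) and $\eep(\tau,a,\varepsilon)=\min_{\phi' a\eta\,\myequiv\varepsilon\,\tau a,\ a\notin\eta} len(\phi')$ as in Definition~\ref{def:anchor}. Before either direction I would record two elementary facts about $\myequiv\varepsilon$. (a) \emph{Congruence}: since the one-step relation $R$ of Definition~\ref{por:def:equiv} swaps an adjacent $\varepsilon$-independent pair inside an arbitrary context $\sigma\cdot\eta$, $\myequiv\varepsilon$ is preserved under left and right concatenation, so $u\myequiv\varepsilon v$ implies $cu\myequiv\varepsilon cv$. (b) \emph{Order preservation}: if two occurrences in a trace carry actions $c,d$ with $c\not\mysim\varepsilon d$ (in particular whenever $c=d$, since $\varepsilon$-independence requires distinct actions), then their left-to-right order is identical in every $\myequiv\varepsilon$-equivalent trace. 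Fact (b) holds because a single $R$-step is an adjacent transposition of an $\varepsilon$-independent pair, so it can only reverse the order of that specific independent pair and leaves the relative order of every dependent pair untouched; induction along the reflexive–transitive closure finishes it. In particular the appended occurrence of $a$ in $\tau a$ stays the last $a$ in every equivalent trace, so $\mathit{lastPos}(\cdot,a)$ always tracks it.

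For the upper bound I would run the loop as a reverse induction on $t$ maintaining the invariant that, after index $t$ is processed, $\tau(t)\cdots\tau(T-1)\,a\myequiv\varepsilon\phi\,a\,\eta$, where $\phi$ is the current value of the variable (the blockers found among indices $\ge t$) and $\eta$ collects the dropped actions, with $a\notin\eta$. The inductive step splits on the branch taken: if $\tau(t)$ is kept, prepending it only reassociates the block; if $\tau(t)$ is dropped, the loop guard guarantees $\tau(t)\mysim\varepsilon b$ for every $b\in\phi a$, so by Fact (a) I can commute $\tau(t)$ rightward through all of $\phi$ and then past $a$, moving it into $\eta$. Because $a\not\mysim\varepsilon a$, no occurrence of $a$ can ever satisfy the drop condition, so $a\notin\eta$ throughout. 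At $t=0$ this produces an explicit witness $\phi a\eta\myequiv\varepsilon\tau a$ with $a\notin\eta$ and $\mathit{lastPos}(\phi a\eta,a)=len(\phi)$, hence $\eep(\tau,a,\varepsilon)\le len(\phi)$.

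For the lower bound I would show that each of the $len(\phi)$ blocker occurrences must precede the appended $a$ in every equivalent trace, which forces $a$ into a position at least $len(\phi)$. Let $t_1>t_2>\dots>t_m$ be the indices added to $\phi$ in processing order, so $m=len(\phi)$. I would induct on $j$: when $\tau(t_j)$ was added, the guard found some $b\in\{\tau(t_1),\dots,\tau(t_{j-1})\}\cup\{a\}$ with $\tau(t_j)\not\mysim\varepsilon b$. If $b=a$, Fact (b) applied to the dependent pair $(\tau(t_j),a)$, which has $\tau(t_j)$ to the left of $a$ in $\tau a$, gives the claim directly. If $b=\tau(t_i)$ with $i<j$, then Fact (b) keeps $\tau(t_j)$ left of $\tau(t_i)$, while the induction hypothesis keeps $\tau(t_i)$ left of $a$; transitivity of ``precedes in every equivalent trace'' finishes it. Since these $m$ occurrences are distinct and all precede $a$, any equivalent trace $\psi a\zeta$ satisfies $len(\psi)\ge m=len(\phi)$, giving $\eep(\tau,a,\varepsilon)\ge len(\phi)$ and hence equality.

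The main obstacle I anticipate is making Fact (b) and the derived ``precedes in every equivalent trace'' relation fully rigorous in the presence of repeated actions: I must track individual \emph{occurrences}, not just action letters, across the occurrence bijection induced by each $R$-step, and confirm that a chain of such order constraints composes correctly under $\myequiv\varepsilon$. Everything else — the congruence property and the two inductions — is routine bookkeeping once the occurrence tracking is set up.
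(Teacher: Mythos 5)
Your proof is correct, and its skeleton matches the paper's: both establish the two inequalities $\eep(\tau,a,\varepsilon)\le len(\phi)$ and $\eep(\tau,a,\varepsilon)\ge len(\phi)$; both get the first by exhibiting the witness $\phi a\eta\myequiv{\varepsilon}\tau a$ with $a\notin\eta$, commuting each dropped action rightward past $a$ (your loop invariant is a cleaner, fully inductive formalization of the paper's ``repeat this process for all actions in $\eta$,'' and it makes explicit why dropped actions never need to commute with one another); and both rest the second inequality on the fact that two occurrences carrying non-$\varepsilon$-independent actions can never exchange relative order under $\myequiv{\varepsilon}$. Where you genuinely diverge is in how the lower bound is organized and how repeated actions are handled. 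The paper argues by contradiction: it first replaces repeated occurrences by distinct ``pseudo-actions'' that inherit the independence relation (and are pairwise dependent among themselves), assumes a shorter decomposition $\phi' a\eta'$ exists, picks the \emph{rightmost} occurrence $c\in\phi\setminus\phi'$, and derives the contradiction from the single dependent pair formed by $c$ and its blocker $b$, which order preservation forces to stay on opposite sides of $a$. You instead prove directly, by induction in processing order, that \emph{every} blocker occurrence precedes the appended $a$ in every equivalent trace, composing precedence constraints transitively, and you handle multiplicity by tracking occurrences across the bijection induced by each $R$-step rather than by renaming. Your route buys a direct (non-contradiction) argument and dispenses entirely with the renaming device, whose bijectivity and inheritance properties the paper leaves as ``it can be checked''; the paper's route buys a shorter case analysis once renaming is in place, since one dependent pair suffices instead of a chain. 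The occurrence-level bookkeeping you flag as the main obstacle is real, but it is exactly the burden the paper's pseudo-action construction is designed to discharge, so one of the two devices must appear in any rigorous write-up; your sketch of how to carry it out is sound.
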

\begin{proof}
For a trace $\tau$ and an action $a$, algorithm $eep(\tau,a, \varepsilon)$ constructs a trace $\phi$ and returns its length. 
To prove that $len(\phi)$ gives the $eep$ $k$ of $a$ on $\tau$, we show both $len(\phi) \geq k$ and $len(\phi) \leq k$.
 
\noindent
{\bf \boldmath $len(\phi) \geq k$}: It suffice to prove the statement by constructing a trace $\eta$ such that $\phi a \eta \myequiv\varepsilon \tau a$ and $a\notin \eta$.
Let $\eta = \tau \backslash \phi$ be the remaining subsequence of $\tau$ after removing the actions in $\phi$. We note that the ordering of actions in $\eta$ is the same as that in $\tau$.  For each action $c\in \eta$, line~\ref{por:ln:anchor2} is not executed. Hence, for all actions $b\in \phi a$ which is originally to the right of $c$, we have $b\mysim\varepsilon c$. 
Therefore, action $c$ can be swapped repeatedly to the right of action $a$. 
Repeat this process for all actions in $\eta$, we derive trace $\phi a \eta$ from the original trace $\tau a$. Therefore $\phi a \eta \myequiv \varepsilon \tau a$.
In addition, we note that from Definition~\ref{por:def:appind}, an $\varepsilon$-independent action pair consists of two distinctive actions, which implies $a\not\mysim\varepsilon a$.
Hence, for each occurrence $a\in \tau$, line~\ref{por:ln:anchor2} is not executed, that is,
$a\notin \eta$.
Therefore, the statement holds.

\noindent
{\bf \boldmath $len(\phi) \leq k$}: 
First, we convert any trace $\tau a$ to a trace consists of only distinctive actions.
If otherwise some action $b\in \tau a$ occurs more than once, we replace the occurrences as distinctive pseudo-actions $b_0,b_1,\dots$, such that each $b_i$ inherit the same independence relation from $b$ and any pair of these pseudo-actions is not independent.
In this way, we map an arbitrary trace $\tau a$ to a trace consists of only distinctive actions. It can be checked that this mapping is bijective.
Without loss of generality, we assume that the actions in $\tau a$ are distinctive. \\
We prove $len(\phi) \leq k$ by contradiction. Suppose $len(\phi) > k$, then there exist traces $\phi',\eta'$ such that (i) $a\notin \eta'$, (ii) $\phi' a \eta' \myequiv\varepsilon \tau a$, and (iii) $len(\phi') < len(\phi)$.
From (iii), there exists an action $c \in \phi \backslash \phi'$. If there are multiple choices of such actions, we choose the rightmost action $c$ in $\phi$.
From line~\ref{por:ln:anchor1} and~\ref{por:ln:anchor2}, action $c$ is in $\phi$ iff there exists another action $b\in\phi$ to the right of $c$ such that $c\not\mysim\varepsilon b$.
Since we choose action $c$ as the rightmost action in $\phi$ that is not in $\phi'$,
we have $b\in\phi'$. 
Originally in trace $\tau a$, action $b$ is to the right of action $c$.
As actions $b$ and $c$ are not  $\varepsilon$-independent, in any equivalent trace $\phi' a \eta' \myequiv\varepsilon \tau a$, the relative position of them should not be changed. 
Hence in trace $\phi' a \eta'$,  action $b$ is also to the right of action $c$. 
However, since $b\in \phi'$ and $c\notin\phi'$, we have action $c$ is to the right of action $b$ in trace $\phi' a \eta'$.
We derive a contradiction. Therefore, if the actions in $\tau a$ are distinctive, $len(\phi) \leq k$. \\
\end{proof}

\subsection{Complete description of the examples}
\subsubsection{Platoon.}
\label{app:platoon}
Consider an $N$ car platoon on a single lane road (see Figure~\ref{por:hioa:car}). Each car can choose one of three actions at each time step: $a$ (accelerate), $b$ (brake), or $c$ (cruise). Car $0$ can choose any action at each time step; remaining cars try to keep safe distance with predecessor by choosing accelerate ($a$) if the distance is more than $50$,  brake ($b$) if the distance is less than $30$, and cruise ($c$) otherwise.
%
For each $i \in [N]$, $x[2i]$ is the position,  $x[2i +1]$ is the velocity, and $m[i]$ is the chosen action,  of the $i^{th}$ car. 
At each step, $m[i]$ is updated using relative positions according to the rule described above, and then $x$ is updated according to the actions.
For concreteness, the linear state transition equation for a 2-car platoon is shown below:
\begin{figure}[ht]
\centering
  \hrule
  \two{.47}{.47}
  {\lstinputlisting[language=ioaNums,lastline=9]{car}}
  {\lstinputlisting[language=ioaNumsRight,firstline=10]{car}}
  \hrule
  \caption{\small Labeled transition system model of cars keeping a platoon.}
  \label{por:hioa:car}
\end{figure}

\begin{align}
\small
x \leftarrow
\begin{bmatrix}
1 & \Delta t & 0 & 0\\
0 & 1 & 0 & 0 \\
0 & 0 & 1 & \Delta t \\
0 & 0 & 0 & 1 \\ 
\end{bmatrix} x+ 
\begin{bmatrix}
\frac{acc_0 (\Delta t)^2}{2} \\
acc_0  \Delta t \\
\frac{acc_1 (\Delta t)^2} {2} \\
acc_1 \Delta t \\
\end{bmatrix} = Ax + b_{m},
\end{align}
where $acc_{i} >0$ if car $i$ accelerates; $acc_{i}<0$ if it brakes; and $acc_{i}=0$ if it cruises. 
%
%
For any value of $m$, the discrepancy function for the corresponding actions are the same: For any $q, q'$ with $q.L = q'.L$, $\beta_a(|q.x - q'.x|) = |A| |q.x - q'.x|$. For any $i,j \in [9]$ with $i \neq j$, we notice that $|a_i a_j(q).x - a_j a_i(q).x| = |Ab_{m_i} - Ab_{m_j} + b_{m_j} - b_{m_i}|$ which is a constant number and can be used as $\varepsilon$. If we choose $\Delta t = 0.1$, then the discrepancy function could be $\beta_a(|q.x - q'.x|_2) =1.06 |q.x - q'.x|_2$. Furthermore, if $acc_{i}$ can choose from $\{-10, 0\}$, or from $\{10, 0\}$, then the corresponding actions are $\varepsilon$-independent with $\epsilon = 0.141$, and if $acc_{i}$ can choose from $\{-10, 0, 10\}$, then the corresponding actions are $\varepsilon$-independent with $\epsilon = 0.282$.

\subsubsection{Room heating problem}
\label{sec:room}
We present a building heating system in Fig.~\ref{hioa:roomheating}.
The building has $N$ rooms each with a heater.
For $i\in[N]$, $x[i] \in \reals$ is the temperature of room $i$ and $m[i] \in\{0,1\}$ captures the off/on state of the heater in the room.
The building measures the temperature of rooms periodically every $T$ seconds and save the measurements to $y[i]$.
Based on the measurement $y[i]$, each room takes action $a_i$ to decide whether to turn on or turn off its heater. The boolean variable $d[i]$ indicates whether room $i$ has made a decision. 
These decisions are made asynchronously among the rooms with a small delay $h$.
For this system, we want to check whether the temperature of the room remains in an appropriate range.

\begin{figure}[ht]
\centering
  \hrule
  \two{.47}{.47}
  {\lstinputlisting[language=ioaNums,lastline=12]{roomheating}}
  {\lstinputlisting[language=ioaNumsRight,firstline=13]{roomheating}}
  \hrule
  \caption{Transition system of room heating.}
  \label{hioa:roomheating}
\end{figure}

For $i\in [N]$, actions $\mathit{on}_i,\mathit{off}_i$ capture the decision making process of room $i$ on whether or not to turn on the heater. 
During the process, time elapses for a (short) period $h$, which leads to an update of the temperature 
as an affine function of current temperature $x$ and the heaters state $m$. 
The affine function is derived from the thermal equations presented in~\cite{fehnker2004benchmarks}. 
In this section, we use an instance of the system with the following matrices:
\begin{equation}
\small
\label{por:eq:heat-a}
W_h = 
\begin{bmatrix}
0.96 &  0.01 & 0.01 \\
0.02 &  0.97 &  0.01\\
0 &  0.01 &  0.97
\end{bmatrix},
b_h = 
\begin{bmatrix}
1.2\\
0\\
1.2
\end{bmatrix}, 
C_h = 
\begin{bmatrix}
0.4 & 0& 0\\
0 & 0 &0\\
0 & 0 & 0.4
\end{bmatrix}.
\end{equation}
After a room controller makes a decision ($\mathit{on}_{i}$ or $\mathit{off}_i$ transition occurs), the variable $d[i]$ to $\true$.
After all rooms make their decisions, action $flow$ captures the time elapse for a (longer) period $T$ which also updates the measured values $y$.
We use an instance of this step with the following matrices:
\begin{equation}
\label{por:eq:heat-a}
\small
W_T = 
\begin{bmatrix}
0.18 & 0.11 &  0.14\\
0.18 &  0.25 &  0.17\\
0.09 & 0.13 & 0.28
\end{bmatrix},
b_T = 
\begin{bmatrix}
34.2 \\
24 \\ 
30
\end{bmatrix},
C_T = 
\begin{bmatrix}
11.4 & 0& 0\\
0 & 8&0\\
0 & 0 & 10
\end{bmatrix}.
\end{equation}

For each $i\in[N]$ and $a_i \in\{\mathit{on}_i,\mathit{off}_i\}$, we will derive the discrepancy function for action $a$.
For any $q, q'$ with $q.L = q'.L$,
\[
\begin{array}{rl}
&|a_i(q).x - a_i(q').x| \\ 
=& |W_h q.x + b_h + C_h q.m - W_h q'.x - b_h - C_h q'.m| \\
\leq &|W_h||q.x-q'.x| 
\end{array}
\]
We note that $|W_h|_2 = 0.99$.
Hence, we can define $\beta_a (|q.x-q'.x|_2) = 0.99|q.x-q'.x|_2$ 
as the discrepancy functions of each $a\in \{\mathit{on}_i,\mathit{off}_i\}_{i\in[3]}$.
Similarly, we derived that $\beta_{flow}(|q.x-q'.x|_2) = 0.52|q.x-q'.x|_2$.

For any $i,j\in[3]$ with $i\neq j$, $a_i \in\{\mathit{on}_i,\mathit{off}_{i}\}$ and $a_j \in\{\mathit{on}_j,\mathit{off}_j\}$, we can prove $a_i \mysim \varepsilon a_j$ with $\varepsilon = 0.6$.
Notice that, $a_i(q).x = W_h q.x+b_h + C_h q.m = a_j(q).x$ are identical, but $a_i(q).m$ and $a_j(q).m$ could be different.
\[
\begin{array}{rl}
&|a_ia_j(q).x - a_ja_i(q).x|  \\
= &|W_h a_j(q).x + b_h + C_h a_j(q).m - W_h a_i(q).x - b_h - C_ha_i(q).m| \\
= & |C_h a_j(q).m - C_h a_i(q).m| \leq |C_h| |a_j(q).m-a_i(q).m|
\end{array}
\]
We note that $|C_h|_2 = 0.4$. We will give an upper bound on $|a_j(q).m-a_i(q).m|$.
Notice that $a_i(q).m$ and $q.m$ can only differ in one bit ($m_i$). 
Similarly, $a_j(q).m$ and $q.m$ can only differ in one bit ($m_j$). 
Hence $a_i(q).m$ and $a_j(q).m$ can be differ in at most two bits, and $|a_i(q).m-a_j(q).m|_2 \leq |[1,1,0]|_2 = 1.41$. 
Therefore, 
\[
|a_ia_j(q).x - a_ja_i(q).x|_2 \leq 0.4*1.41 \leq 0.6.
\]
Thus for any pair of rooms, the on/off decisions are $\varepsilon$-approximately independent with $\varepsilon=0.6$.
For a round, where each room makes a decision once in arbitrary order, there are in total $3! = 6$ $\varepsilon$-equivalent action sequences.




\end{document}